\documentclass[11pt]{article}

\usepackage[letterpaper,margin=1in,bottom=1in]{geometry}

\usepackage{amsmath,amssymb,amsthm}
\usepackage{microtype}
\usepackage{xcolor}
\usepackage{array}
\usepackage{float}
\usepackage{algorithm}
\usepackage{algpseudocode}
\usepackage{authblk}
\algrenewcommand\algorithmicrequire{\textbf{Input:}}
\algrenewcommand\algorithmicensure{\textbf{Output:}}
\usepackage{tikz}
\usetikzlibrary{arrows.meta,calc,positioning}
\usepackage{quantikz}
\usepackage{hyperref}
\usepackage[capitalise]{cleveref}
\usepackage{physics}

\floatstyle{ruled}
\restylefloat{algorithm}

\hypersetup{
  colorlinks=true,
  linkcolor=blue!55!black,
  urlcolor=blue!55!black,
  citecolor=blue!55!black,
  pdftitle={Gate-Efficient Implementation of Query-Optimal
  Time-Dependent Hamiltonian Simulation}
}
\allowdisplaybreaks

\newtheorem{theorem}{Theorem}
\newtheorem{lemma}[theorem]{Lemma}
\newtheorem{proposition}[theorem]{Proposition}
\theoremstyle{definition}
\newtheorem{definition}{Definition}

\theoremstyle{remark}
\newtheorem*{remark}{Remark}
\theoremstyle{plain}

\newcommand{\Sys}{\mathcal H_{\mathsf S}}
\newcommand{\Anc}{\mathcal H_{\mathsf A}}
\newcommand{\Time}{\mathcal H_{\mathsf T}}
\newcommand{\Priv}{\mathcal H_{\mathrm{priv}}}
\newcommand{\HAMT}{\mathrm{HAM\mbox{-}T}}
\newcommand{\eps}{\varepsilon}
\renewcommand{\norm}[1]{\left\lVert #1\right\rVert}
\renewcommand{\ket}[1]{\lvert #1\rangle}
\renewcommand{\bra}[1]{\langle #1\rvert}
\renewcommand{\ketbra}[2]{\lvert #1\rangle\!\langle #2\rvert}
\newcommand{\bigO}{\mathcal O}

\DeclareRobustCommand{\affilstyle}[1]{{\small\itshape #1}}

\title{Gate-Efficient Implementation of the Query-Optimal\\
Time-Dependent Hamiltonian Simulation}

\author[1]{Boyang Chen\thanks{\texttt{by-chen24@mails.tsinghua.edu.cn}}}
\author[2,3]{Minbo Gao\thanks{\texttt{gmb17@tsinghua.org.cn}}}
\author[1]{Zhengfeng Ji\thanks{\texttt{jizhengfeng@tsinghua.edu.cn}}}
\author[4,5]{Tongyang Li\thanks{\texttt{tongyangli@pku.edu.cn}}}
\author[4,5]{Xinzhao Wang\thanks{\texttt{xinzhaowang3@gmail.com}}}
\author[4,5]{Shuo Zhou\thanks{\texttt{antientropy@pku.edu.cn}}}

\affil[1]{\affilstyle{Department of Computer Science and Technology,
Tsinghua University, Beijing, China}}

\affil[2]{\affilstyle{Institute of Software,
Chinese Academy of Sciences, Beijing, China}}

\affil[3]{\affilstyle{University of Chinese Academy of Sciences,
Beijing, China}}

\affil[4]{\affilstyle{Center on Frontiers of Computing Studies,
Peking University, Beijing, China}}

\affil[5]{\affilstyle{School of Computer Science,
Peking University, Beijing, China}}
\date{}

\begin{document}
\maketitle
\vspace{-2.2em}

\begin{abstract}
The query-optimal algorithm of \cite{CGWZ26} for general time-dependent
Hamiltonian simulation uses
\begin{equation*}
  q = \bigO\! \left( \alpha T +
      \frac{\log(1/\eps)}{\log\! \left(e + \log(1/\eps)/(\alpha T) \right)}
      \right)
\end{equation*}
queries to $\HAMT$ within $\eps$ error for a Lipschitz-continuous time-dependent
Hamiltonian $H(t)$ on $[0,T]$ satisfying $\norm{H(t)}\leq\alpha$.
However, its direct circuit implementation incurs a substantially larger gate
overhead.
In this note, we give an implementation of the same algorithm that retains its
optimal query complexity and uses
\begin{equation*}
  \bigO\! \left[ q \left( a + \log\! \left(1 + \frac{T(\alpha + \beta T)}{\eps}
      \right) \right) \right]
\end{equation*}
one- and two-qubit gates, where $a$ is the number of block-encoding ancilla
qubits and $\beta$ is the Lipschitz constant of $H$.
The main ingredient is an exact dyadic factorization of the ordered update
product in the underlying one-query transducer.
\end{abstract}

\newpage
\section{Introduction}

Hamiltonian simulation has been a central topic in quantum algorithm design
since the early development of quantum computation.
A long line of work has progressively improved its dependence on the evolution
time and the target precision.
Early algorithms were based on product formulas and sparse-Hamiltonian
simulation techniques~\cite{AT03,BACS07}, followed by approaches using linear
combinations of unitaries and truncated series~\cite{CW12,BCK15}.
A major conceptual advance came with quantum signal processing and
qubitization~\cite{LC17,LC19}, which, together with the more general framework
of quantum singular value transformation~\cite{GSLW19Full}, led to an
essentially complete understanding of the query complexity of time-independent
Hamiltonian simulation.
For a Hamiltonian $H$ satisfying $\norm{H}\leq \alpha$, the optimal number of
queries required to simulate $e^{-iHT}$ to precision $\eps$ is
\begin{equation*}
  \Theta\! \left(\alpha T +
      \frac{\log(1/\eps)}{\log\! \left(e + \log(1/\eps)/(\alpha T) \right)}
      \right).
\end{equation*}
Thus, even in the time-independent setting, reaching the optimal joint
dependence on time and precision required a line of increasingly refined
algorithmic ideas.

The corresponding problem for time-dependent Hamiltonians has proved more
subtle.
Given $H(t)$ on $[0,T]$, the target evolution is the time-ordered propagator
\begin{equation*}
  U_H(T) = \mathcal{T} \exp\! \left(-i \int_0^T H(t) \dd{t} \right),
\end{equation*}
and the noncommutativity of $H(t)$ at different times makes the time ordering an
essential part of the problem.
For this problem, product-formula methods have a long history in time-dependent
Hamiltonian simulation~\cite{HR90,WBHS10,PQSV11}, while later algorithms based
on truncated Dyson series~\cite{LW18,KSB19,BCS+20}, enlarged clock or Floquet
spaces~\cite{MizutaFujii23,Mizuta23,WatkinsEtAl24,LiWang25,ZLA26}, and Magnus
expansions~\cite{AnFangLin22,FangLiuSarkar25,FangLiuZhu25} substantially
improved the simulation complexity in different settings.
Despite this progress, for a general time-dependent Hamiltonian, none of these
approaches attained the optimal query complexity known in the time-independent
setting.
Thus, whether time dependence intrinsically incurs an additional query overhead
remained open long after the time-independent problem had been settled.

A query-optimal algorithm for time-dependent Hamiltonian simulation was
recently given in~\cite{CGWZ26}.
For a time-dependent Hamiltonian satisfying $\norm{H(t)}\leq\alpha$ and the
$\beta$-Lipschitz condition, the algorithm approximates $U_H(T)$ to
error $\eps$ using
\begin{equation*}
  q = \bigO\! \left(\alpha T +
      \frac{\log(1/\eps)}{\log\! \left(e + \log(1/\eps)/(\alpha T) \right)}
      \right)
\end{equation*}
$\HAMT$ queries~\cite{CGWZ26}.
This matches the known lower bound for time-independent
Hamiltonians~\cite{LC17,GSLW19Full}, showing that time dependence incurs no
asymptotic overhead in query complexity.

Query optimality alone does not guarantee a comparably efficient circuit
implementation.
While the algorithm in~\cite{CGWZ26} uses only $\bigO(q)$ queries, its direct
implementation requires
\begin{equation*}
  \widetilde{\bigO}\! \left( q(a + 1) \left( q + \frac{\beta T^2}{\eps} +
      \frac{(\alpha T)^{3/2}}{\sqrt{\eps}} \right) \right)
\end{equation*}
one- and two-qubit gates, where $a$ is the number of block-encoding ancilla
qubits.
This raises the central question of the present work:
\begin{quote}
\emph{Does the query-optimal algorithm admit an efficient gate implementation?}
\end{quote}

In this note, we answer this question affirmatively.
More concretely, we show that
\begin{theorem}
\label{thm:informal-main-theorem}
The algorithm of~\cite{CGWZ26} can be implemented with
\begin{equation}
  \bigO\! \left[ q \left( a + \log\! \left(1 + \frac{T(\alpha + \beta T)}{\eps}
      \right) \right) \right]
  \label{eq:gate-complexity}
\end{equation}
one- and two-qubit gates, and with
\begin{equation*}
  \bigO\! \left( a + \log\! \left(1 + \frac{T(\alpha + \beta T)}{\eps} \right)
      \right)
\end{equation*}
auxiliary qubits.
\end{theorem}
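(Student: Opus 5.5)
The plan is to leave the query structure of the algorithm of~\cite{CGWZ26} unchanged and only re-implement the classical-control and time-register processing between queries. Recall that the algorithm is a sequence of $\bigO(q)$ uses of a one-query transducer. Each use applies $\HAMT$ once, conditioned on a time register, and then applies an ordered update product, which moves the time register and the associated phase/rotation data from one sampling point to the next. In the direct implementation this update product is written as a product over a fine uniform grid of $N$ time points. The grid spacing must be fine enough that discretizing $H(t)$ costs at most $\eps$; by the Lipschitz bound and $\norm{H(t)}\le\alpha$ this forces $N = \mathrm{poly}(T(\alpha+\beta T)/\eps)$. Implementing the product gate by gate is what produces the $q\,\beta T^2/\eps$ and $q(\alpha T)^{3/2}/\sqrt{\eps}$ terms. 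So the first step is to fix the discretization: I take a time register of $n=\bigO(\log(1+T(\alpha+\beta T)/\eps))$ qubits. The Lipschitz estimate $\norm{H(t)-H(t')}\le\beta|t-t'|$ together with a Duhamel-type bound shows that rounding every query time to this grid changes the output by $\bigO(\eps)$.

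The key step is an exact dyadic factorization of the ordered update product. Write $N=2^n$ and let the product be $\prod_{j=0}^{N-1} V_j$, with each factor $V_j$ controlled on (or acting on) the time index $j$. I will show that the product equals a product of $n$ factors $W_0,\dots,W_{n-1}$, where $W_k$ is controlled only by the $k$-th bit of the time register, possibly together with an $\bigO(1)$-size carry or comparison flag. The argument is an induction on the binary expansion. Splitting the index range into halves $[0,N/2)$ and $[N/2,N)$, the ordered product over an interval $[t, t+2^k)$ decomposes as the product over $[t,t+2^{k-1})$ followed by the product over $[t+2^{k-1},t+2^k)$. The top bit then selects whether the first half-block is applied in full or skipped. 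Unrolling this recursion yields one controlled block per bit level, each implementable with $\bigO(1)$ gates plus $\bigO(a)$ gates for the reflection or controlled operation on the block-encoding ancillas. The identity is exact, not approximate, so no extra error accrues.

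With the factorization in hand, the gate count per transducer step is $\bigO(a+n)$. The $a$ term comes from the controlled reflections on the ancillas; the $n$ term comes from the $n$ bit-controlled factors and from incrementing or decrementing the time register. An adder on $n$ qubits costs $\bigO(n)$ gates using a Cuccaro-style ripple-carry with one borrowed ancilla. Summing over the $\bigO(q)$ steps gives~\eqref{eq:gate-complexity}. The qubit count is the $a$ block-encoding ancillas, plus the $n$-qubit time register, plus $\bigO(1)$ flags, and these are reused across steps.

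I expect the main obstacle to be the factorization step: verifying that the ordered product really collapses bitwise. Because the $V_j$ need not commute, the ordering must be preserved through the recursion. The risk is that the controlled half-blocks depend on the lower bits, which would cause a blow-up in the count. The first thing I would try is to show that, in the transducer, the update factors act on the time register only through shifts and diagonal phases. Conjugation by a shift then maps the block for $[0,2^{k-1})$ to the block for $[2^{k-1},2^k)$, so each level needs only one controlled copy.
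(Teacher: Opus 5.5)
Your proof has a gap at the factorization step, which is the whole content of the theorem. Your argument for it is generic: it never uses what the factors are, so it cannot establish the claim. Splitting $\prod_j V_j$ into two half-range products and letting the top bit ``select whether the first half-block is applied'' is the binary decomposition for a prefix product $\prod_{j<x}V_j$ controlled by a register value $x$, as in $U^x$. Here, however, the full product $S^\circ=R_{J-1}\cdots R_0$ acts \emph{on} the time register, and nothing is being selected.

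Your shift-based fallback does not rescue this. It is true that $R_{j+L}=X^LR_jX^{-L}$, where $X$ cyclically shifts the private time labels and fixes the public summand. So the product over $[0,2L)$ equals $X^LCX^{-L}C$, with $C$ the product over $[0,L)$. But both copies act nontrivially on the shared public summand and do not commute, so they cannot be merged into one copy controlled on the top bit. The recursion $T(2L)=2T(L)+\bigO(\log L)$ then gives $\Theta(J)$ gates, no better than the direct implementation. Also, the $R_j$ do not move the time register between sampling points: each one couples the public summand with the $\mathsf T=j$ private slot.

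The missing idea is the specific structure of the Cayley update. On $\mathsf P\mathsf T\mathsf A$ (acting trivially on $\mathsf S$), $R_j=g_j\Phi_j$. Here $\Phi_j$ multiplies the $\mathsf T=j$ private slot by $i$, and $g_j$ is a two-level rotation with cosine $c$ between the \emph{fixed} vector $\ket0_{\mathsf P}\ket0_{\mathsf T}\ket0_{\mathsf A}$ and $\ket1_{\mathsf P}\ket j_{\mathsf T}\ket0_{\mathsf A}$. The phases commute out to one phase gate on $\mathsf P$. The product of the $g_j$ then has a cosine--sine decomposition: an input basis change on $\mathsf T$, one rotation $\operatorname{Rot}(c^J)$ on $\mathsf P$, and an output basis change. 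Each basis change consists of $m=\log_2J$ gates $M_\star(c^{2^\ell})$ on $\mathsf T_\ell$, controlled on $\mathsf T_0=\cdots=\mathsf T_{\ell-1}=0$ together with $\mathsf P=1$ and $\mathsf A=0$.

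In the induction, what becomes independent of the top bit is not the half-product but the two half-range basis changes. They act on disjoint label ranges and fix the public vector, so together they form a single copy $I_{\mathsf T_{m-1}}\otimes W^\star_{J/2}$. The noncommuting coupling through the public vector is concentrated in two rotations $\operatorname{Rot}(r)$, with $r=c^{J/2}$, linking it to labels $0$ and $J/2$. An explicit $3\times3$ identity, using $M_{\mathrm{in}}(r)$ and $M_{\mathrm{out}}(r)^\dagger$ on the span of those two label states, merges them into $\operatorname{Rot}(r^2)$. This is what yields a controlled $S^\circ$ with $\bigO(a+\log J)$ gates.

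Separately, your gate accounting omits the LCU over reuse lengths. PREP on $b=\lceil\log_2(4q+2)\rceil$ qubits costs $\bigO(q)$ gates, and SELECT runs $4q$ iterations of cost $\bigO(a+\log J+b)$ each. You then need $b=\bigO(\log J)$, which holds because $J\geq q$. Finally, the increment acts on the reuse register, not the time register.
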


The algorithm of~\cite{CGWZ26} is based on a one-query transducer containing the
ordered product $S^\circ=R_{J-1}\cdots R_0$, with one update for each of the $J$
sampled times.
For $J=2^m$, we give an exact dyadic factorization of $S^\circ$ into two basis
changes on the $m$-qubit time register, a rotation, and a phase gate.
Each basis change contains a single one-qubit gate for each time qubit,
controlled on the lower time qubits being zero.
These conditions can be computed successively, so a controlled implementation of
$S^\circ$ uses $\bigO(a+\log J)$ one- and two-qubit gates.
Combining this factorization with the implementations of $\operatorname{PREP}$
and $\operatorname{SELECT}$ gives the stated gate and auxiliary-qubit bounds.

The $\HAMT_J$ model imposes an auxiliary-space lower bound.
In the parameter regime of \cref{prop:auxiliary-lower-bound}, any algorithm in
the $\HAMT_J$ model with $t_j=jT/J$ requires
\begin{equation*}
  \Omega\! \left( a + \log\! \left(1 + \frac{\beta T^2}{\eps} \right) \right)
\end{equation*}
auxiliary qubits.  Hence the dependence on $a$ and
$\log(1+\beta T^2/\eps)$ in \cref{thm:informal-main-theorem} is necessary in
this model.

Under a stronger input model, the same circuit implementation also achieves
$L^1$-norm scaling by time reparameterization~\cite{BCS+20}.
In \cref{sec:l1-scaling}, we use a Lipschitz function $h(t)\geq\norm{H(t)}$ and
assume coherent oracle access to the resulting reparameterized Hamiltonian.
The query complexity then depends on
\begin{equation*}
  \Lambda_h := \int_0^T h(t) \dd{t},
\end{equation*}
which upper-bounds $\int_0^T\norm{H(t)}\dd t$.
Choosing $h(t)=\norm{H(t)}$ recovers scaling with the exact integrated norm.

\subsection{Open Questions}
Our note leaves the following questions for future investigation.

\paragraph{Optimality of the gate complexity}
For time-independent Hamiltonians, quantum signal processing gives query-optimal
simulation using $\bigO(q(a+1))$ one- and two-qubit gates.
For general time-dependent Hamiltonians, \Cref{thm:informal-main-theorem} gives
the bound \eqref{eq:gate-complexity}.
Is the additional logarithmic dependence in this bound necessary, or can it be
removed?

\paragraph{Reducing the number of ancilla qubits}
Unlike query-optimal time-independent simulation based on quantum signal
processing, our algorithm uses a transducer and an LCU construction and requires
logarithmically many ancilla qubits in addition to the block-encoding register.
\Cref{prop:auxiliary-lower-bound} shows that this logarithmic dependence is
unavoidable for general Lipschitz-continuous Hamiltonians in the $\HAMT_J$
model: one must have $J=\Omega(\beta T^2/\eps)$, and the time register
$\mathsf T$ alone contains $\log_2J$ qubits to label $H_0,\ldots,H_{J-1}$.
This leaves open whether additional structure in specific Hamiltonian families
(for example, linear time dependence in an adiabatic interpolation) can be
exploited to reduce the ancilla count.

\section{Preliminaries and review of the query-optimal algorithm}
\label{sec:preliminaries}

We review the query-optimal algorithm of \cite{CGWZ26}.
The algorithm first approximates the target evolution by a Cayley product $U_C$.
A one-query transducer then implements $U_C$ when supplied with a catalyst.
Applying finite reuse to this transducer gives block encodings of operators
$P_N$ that approximate $U_C$, without requiring the catalyst as input.
Combining these block encodings by the linear combination of unitaries (LCU)
technique reduces the reuse error, and one step of oblivious amplitude
amplification gives the final simulation circuit.

\subsection{Hamiltonian and \texorpdfstring{$\HAMT$}{HAM-T} access}
Let $\mathsf S$ be the system register with Hilbert space $\Sys$.
The target evolution $U_H(t)$ is defined by
\begin{equation*}
  i \dv{}{t} U_H(t) = H(t)U_H(t), \qquad U_H(0) = I_{\mathsf S},
\end{equation*}
where $H\colon[0,T]\to\operatorname{Herm}(\Sys)$ satisfies
\begin{equation}
  \norm{H(t)} \leq \alpha, \qquad \norm{H(t) - H(s)} \leq \beta|t - s|.
  \label{eq:review-hamiltonian-assumptions}
\end{equation}

We use the following block-encoding definition.
\begin{definition}[{\cite[Definition~43]{GSLW19Full}}]
Let $\mathsf R$ be an auxiliary register and let $\alpha>0$.
A unitary $V$ on $\mathcal H_{\mathsf R}\otimes\Sys$ block-encodes an operator
$X$ on $\Sys$ with normalization $\alpha$ if
\begin{equation*}
  (\bra0_{\mathsf R} \otimes I_{\mathsf S})V (\ket0_{\mathsf R} \otimes
      I_{\mathsf S}) = \frac{X}{\alpha}.
\end{equation*}
When $\alpha=1$, we say that $V$ block-encodes $X$.
\end{definition}

For a power of two $J$, let $\mathsf T$ be a $J$-dimensional time register.
Its basis state $\ket j_{\mathsf T}$ labels the sample time $t_j=jT/J$, and we
set $H_j=H(t_j)$.
We assume access to $H$ through $\HAMT_J$ as defined below, and, as in
\cite[Definition~3]{CGWZ26}, take each $O_j$ to be Hermitian.

\begin{definition}[{\cite[Definition 2]{LW18}}]
Let $H_0,\ldots,H_{J-1}\in\operatorname{Herm}(\Sys)$ satisfy
$\norm{H_j}\leq\alpha$, and let $\mathsf A$ be an $a$-qubit register.
The $\HAMT_J$ oracle for $\{H_j\}_{j=0}^{J-1}$ is
\begin{equation}
  \HAMT_J := \sum_{j = 0}^{J - 1} \ketbra{j}{j}_{\mathsf T} \otimes O_j,
  \label{eq:review-hamt}
\end{equation}
where each $O_j$ is a Hermitian unitary on $\mathcal H_{\mathsf A}\otimes\Sys$
satisfying
\begin{equation}
  (\bra0_{\mathsf A} \otimes I_{\mathsf S}) O_j (\ket0_{\mathsf A} \otimes
      I_{\mathsf S}) = \frac{H_j}{\alpha}.
  \label{eq:review-block-encoding}
\end{equation}
\end{definition}
Thus each $O_j$ block-encodes $H_j$ with normalization $\alpha$.
We henceforth write $\HAMT$ for $\HAMT_J$.

The $\HAMT_J$ model satisfies the following auxiliary-space lower bound.
\begin{proposition}[Auxiliary-space lower bound]
\label{prop:auxiliary-lower-bound}
Let $a\geq1$ and $0<\eps\leq\min\{1/2,\alpha T/8\}$.
Any algorithm that, given only $\HAMT_J$ in \eqref{eq:review-hamt} with
$t_j=jT/J$, approximates $U_H(T)$ to error $\eps$ for every $H$ satisfying
\eqref{eq:review-hamiltonian-assumptions} must have
\begin{equation*}
  J \geq \frac{\beta T^2}{16 \eps}.
\end{equation*}
It therefore requires
\begin{equation*}
  \Omega\! \left( a + \log\! \left(1 + \frac{\beta T^2}{\eps} \right) \right)
\end{equation*}
auxiliary qubits beyond $\mathsf S$.
\end{proposition}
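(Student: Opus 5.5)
We will prove the bound on $J$ by an indistinguishability argument: we exhibit two Hamiltonians $H$ and $H'$, both satisfying \eqref{eq:review-hamiltonian-assumptions}, that agree at every sample time $t_j=jT/J$ and therefore induce the same $\HAMT_J$ oracle, yet whose evolutions differ by more than $2\eps$. Since the algorithm only ever sees $\HAMT_J$, it produces the same output $\widetilde U$ for both. By the triangle inequality, $\norm{U_H(T)-U_{H'}(T)}\le 2\eps$ would then be forced, which is a contradiction.

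For the construction we take a one-qubit system (extended trivially if needed) and set $H=0$. We let $H'(t)=f(t)Z$, where $f$ is a periodic ``tent'' function. It vanishes at every grid point $t_j$ and rises linearly with slope $\beta$ to a peak at the midpoint of each cell. The peak height is capped at $\alpha$, so $f$ has Lipschitz constant $\beta$ and $|f|\le\alpha$. Because $H'$ commutes with itself at all times, $U_{H'}(T)=e^{-i\Phi Z}$ with $\Phi=\int_0^T f$. The distance to $U_H(T)=I$ is then $|e^{-i\Phi}-1|=2|\sin\Phi|$, which is at least $c\min\{\Phi,1\}$.

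We then compute $\Phi$. When the peak $\beta T/(2J)$ is at most $\alpha$, each cell contributes $\tfrac14\beta (T/J)^2$, so $\Phi=\beta T^2/(4J)$. If $J<\beta T^2/(16\eps)$, this gives $\Phi>4\eps$. The hypotheses $\eps\le1/2$ and $\eps\le\alpha T/8$ handle the saturation cases. When $\Phi$ would exceed $\pi/2$, we scale the amplitude of $f$ down so that $\Phi$ sits in the range where $\sin$ is comparable to its argument, keeping $\Phi\in(4\eps,\pi/2]$ and thus $2\sin\Phi>2\eps$. When the cap at $\alpha$ is active, the tent becomes a trapezoid. Each cell then contributes at least a constant fraction of $\alpha T/J$, so the total is a constant times $\alpha T\ge 8\eps$, which again exceeds the threshold. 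The main obstacle is this constant bookkeeping across the regimes: making sure the specific constant $16$ comes out, possibly by choosing $\Phi$ slightly differently or using $|e^{-i\Phi}-1|\ge 2\Phi/\pi\cdot 2$ for $\Phi\le\pi/2$.

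Finally, we derive the qubit count. The oracle $\HAMT_J$ acts on $\mathsf T\otimes\mathsf A\otimes\mathsf S$, so any circuit querying it must hold the $\log_2 J$ qubits of $\mathsf T$ and the $a$ qubits of $\mathsf A$ in addition to $\mathsf S$. Combining $\log_2J\ge\log_2(\beta T^2/(16\eps))$ with the trivial bound of at least $\max\{a,1\}$ auxiliary qubits gives $\Omega(a+\log(1+\beta T^2/\eps))$. The ``$1+$'' inside the logarithm is absorbed by the constant term when $\beta T^2/\eps$ is small.
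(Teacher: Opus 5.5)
Your strategy is the paper's. You use two Hamiltonians proportional to $Z$ that vanish at every $t_j$, so that a single oracle serves both. You take a slope-$\beta$ tent capped at $\alpha$, rescale its amplitude, and then count the $a+\log_2J$ qubits of $\mathsf A$ and $\mathsf T$. The paper uses the symmetric pair $\pm hZ$ with $\int_0^T h=2\eps$ instead of your $0$ versus $fZ$, which changes nothing essential.

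The one real problem is the step you flag as the main obstacle, because it rests on a false identity. For $\Phi=\int_0^T f$ one has $\norm{e^{-i\Phi Z}-I}=|e^{-i\Phi}-1|=2|\sin(\Phi/2)|$, not $2|\sin\Phi|$. Your fallback $|e^{-i\Phi}-1|\ge 4\Phi/\pi$ is also false, since the left side is about $\Phi$ for small $\Phi$; the correct Jordan-type bound is $2\Phi/\pi$ for $0\le\Phi\le\pi$. Separately, the window $\Phi\in(4\eps,\pi/2]$ you propose is empty when $\pi/8\le\eps\le1/2$, a range the hypotheses allow.

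The repair is short and yields exactly the constant $16$. Let $\Delta=T/J$ and let $f_0$ be the capped tent.
\begin{itemize}
\item If $\alpha\ge\beta\Delta/2$, then $\int_0^T f_0=\beta T^2/(4J)>4\eps$ whenever $J<\beta T^2/(16\eps)$.
\item If $\alpha<\beta\Delta/2$, each cell contributes $\alpha\Delta-\alpha^2/\beta>\alpha\Delta/2$, so $\int_0^T f_0>\alpha T/2\ge 4\eps$.
\end{itemize}
In both regimes set $f=(4\eps/\int_0^T f_0)\,f_0$. The capped regime needs this rescaling too, because a large $\Phi$ can make $2|\sin(\Phi/2)|$ small. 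The prefactor is below one, so $f$ remains $\beta$-Lipschitz, bounded by $\alpha$, and zero on the grid. Then $\norm{U_{H'}(T)-U_H(T)}=\norm{e^{-4i\eps Z}-I}=2\sin(2\eps)>2\eps$, because $\sin x>x/2$ for $0<x\le1$. Your asymmetric pair uses $J<\beta T^2/(16\eps)$ and $\eps\le\alpha T/8$ with no slack, whereas the paper's symmetric pair needs only $\int_0^T h_0\ge 2\eps$.

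Finally, make explicit where $a\ge1$ enters. A common valid oracle needs each $O_j$ to be a Hermitian unitary whose $\ket0_{\mathsf A}$ block is $0$, for instance Pauli $X$ on one qubit of $\mathsf A$. No such unitary exists when $a=0$. This is what justifies saying that both Hamiltonians induce the same $\HAMT_J$.
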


\begin{proof}
When $\beta=0$, the bound on $J$ is immediate and the $a$-qubit register
$\mathsf A$ gives the auxiliary-space bound.
Hence assume $\beta>0$.
Set $\Delta=T/J$.
For $0\leq j<J$, define on $[j\Delta,(j+1)\Delta]$
\begin{equation*}
  h_0(t) = \min\{\beta(t - j \Delta), \beta((j + 1) \Delta - t), \alpha\}.
\end{equation*}
Each piece has absolute slope at most $\beta$, and adjacent pieces agree at
their endpoints.
Thus $h_0$ is $\beta$-Lipschitz on $[0,T]$, satisfies $0\leq h_0(t)\leq\alpha$,
and vanishes at every sample time.
Moreover,
\begin{equation*}
  \int_{j \Delta}^{(j + 1) \Delta}h_0(t) \dd{t} =
  \begin{cases}
    \beta \Delta^2/4, &\alpha \geq \beta \Delta/2, \\
    \alpha \Delta - \alpha^2/\beta, &\alpha < \beta \Delta/2.
  \end{cases}
\end{equation*}
In the first case, the integral is at least
$\frac14\min\{\beta\Delta^2,\alpha\Delta\}$.
In the second, $\min\{\beta\Delta^2,\alpha\Delta\}=\alpha\Delta$ and
$\alpha\Delta-\alpha^2/\beta>\alpha\Delta/2$.
Summing over the $J$ intervals therefore gives
\begin{equation}
  \int_0^T h_0(t) \dd{t} \geq \frac14 \min\!
      \left\{\frac{\beta T^2}{J}, \alpha T \right\}.
  \label{eq:auxiliary-lower-bound-area}
\end{equation}

Suppose that $J<\beta T^2/(16\eps)$.
By \eqref{eq:auxiliary-lower-bound-area} and $\alpha T\geq8\eps$, the integral
of $h_0$ is greater than $2\eps$.
Define
\begin{equation*}
  h(t) := \frac{2 \eps}{\int_0^T h_0(s)\, \dd{s}}\, h_0(t).
\end{equation*}
The prefactor is smaller than one, so $h$ is $\beta$-Lipschitz, bounded by
$\alpha$, zero at every sample time, and satisfies
\begin{equation*}
  \int_0^T h(t) \dd{t} = 2 \eps.
\end{equation*}
On a single-qubit system, consider the commuting Hamiltonians
\begin{equation*}
  H_+(t) = h(t)Z, \qquad H_-(t) = -h(t)Z.
\end{equation*}
For both Hamiltonians, $H_\pm(t_j)=0$.
Since $a\geq1$, choose every $O_j$ to apply a Pauli $X$ to one qubit of
$\mathsf A$ and the identity to all other qubits.
Its encoded block is zero because $\bra0X\ket0=0$, so the two Hamiltonians give
the same $\HAMT_J$ oracle.
Since both are proportional to $Z$ at all times,
\begin{equation*}
  U_{H_+}(T) = e^{-2i \eps Z}, \qquad U_{H_-}(T) = e^{2i \eps Z},
\end{equation*}
and hence
\begin{equation*}
  \norm{U_{H_+}(T) - U_{H_-}(T)} = 2 \sin(2 \eps) > 2 \eps.
\end{equation*}
Here the last inequality follows from $\sin x>x/2$ for $0<x\leq1$.
Since the oracle is identical in the two cases, the algorithm produces the same
approximation.
If both errors were at most $\eps$, the triangle inequality would give
\begin{equation*}
  \norm{U_{H_+}(T) - U_{H_-}(T)} \leq \eps + \eps = 2 \eps,
\end{equation*}
a contradiction.
This proves the bound on $J$.

By definition, $\HAMT_J$ acts on the $a$-qubit register $\mathsf A$ and the
$\log_2J$-qubit register $\mathsf T$, in addition to $\mathsf S$.
Hence any circuit using this oracle has at least $a+\log_2J$ auxiliary qubits
beyond $\mathsf S$.
Since $a\geq1$, the bound on $J$ implies
\begin{equation*}
  a + \log_2J = \Omega\! \left( a + \log\! \left(1 + \frac{\beta T^2}{\eps}
      \right) \right),
\end{equation*}
as claimed.
\end{proof}

\subsection{Cayley steps}

Set $w=\alpha T/J$ and define the Cayley steps
\begin{equation}
  U_j^C = \left(I_{\mathsf S} - \frac{iw}{2 \alpha}H_j \right)
      \left(I_{\mathsf S} + \frac{iw}{2 \alpha}H_j \right)^{-1}.
  \label{eq:review-cayley-step}
\end{equation}
Define their product by
\begin{equation}
  U_C := U_{J - 1}^C \cdots U_1^CU_0^C.
  \label{eq:review-cayley-product}
\end{equation}
For $\ket{\psi_0}\in\Sys$, define the intermediate states recursively by
\begin{equation}
  \ket{\psi_{j + 1}} := U_j^C \ket{\psi_j}, \qquad U_C \ket{\psi_0} =
      \ket{\psi_J}.
  \label{eq:review-cayley-states}
\end{equation}
The Cayley product approximates the target evolution with the following error.
\begin{lemma}[{\cite[Eq.~(13)]{CGWZ26}}]
\label{lem:review-cayley-approximation}
Suppose $H$ satisfies \eqref{eq:review-hamiltonian-assumptions}.
The Cayley product $U_C$ in \eqref{eq:review-cayley-product} satisfies
\begin{equation}
  \norm{U_C - U_H(T)} \leq \frac{\beta T^2}{2J} + \frac{(\alpha T)^3}{12J^2}.
  \label{eq:review-cayley-error}
\end{equation}
\end{lemma}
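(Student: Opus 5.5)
We split the error with the triangle inequality through the piecewise-constant (frozen-Hamiltonian) propagator
\begin{equation*}
  U_F := e^{-iH_{J-1}\Delta}\cdots e^{-iH_1\Delta}e^{-iH_0\Delta}, \qquad \Delta = T/J = w/\alpha .
\end{equation*}
This gives $\norm{U_C-U_H(T)}\leq\norm{U_C-U_F}+\norm{U_F-U_H(T)}$. Each of the two terms is then bounded by a telescoping argument over the $J$ steps. The argument uses the fact that all factors are unitary. The Cayley steps are unitary because each $H_j$ is Hermitian, so $U_j^C$ is the Cayley transform of a Hermitian operator.

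\emph{Discretization term.} On $[t_j,t_{j+1}]$, compare the exact propagator $U_H(t_{j+1},t_j)$ with $e^{-iH_j\Delta}$. In the interaction picture with respect to $H_j$, or via Duhamel's formula, we have
\begin{equation*}
  U_H(t_{j+1},t_j)-e^{-iH_j\Delta} = -i\int_{t_j}^{t_{j+1}} e^{-iH_j(t_{j+1}-s)}\bigl(H(s)-H_j\bigr)U_H(s,t_j)\,\dd s .
\end{equation*}
The norm of this is at most $\int_{t_j}^{t_{j+1}}\beta(s-t_j)\,\dd s=\beta\Delta^2/2$, using the Lipschitz bound in \eqref{eq:review-hamiltonian-assumptions}. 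Telescoping over the $J$ intervals gives $J\beta\Delta^2/2=\beta T^2/(2J)$. This is the first term of \eqref{eq:review-cayley-error}.

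\emph{Cayley term.} For a fixed $j$, diagonalize $H_j$. The eigenvalues $\lambda$ satisfy $|\lambda|\leq\alpha$. Set $x=\lambda\Delta$, so that $|x|\leq w$. Then $U_j^C$ and $e^{-iH_j\Delta}$ are simultaneously diagonal, with eigenvalues $\frac{1-ix/2}{1+ix/2}=e^{-2i\arctan(x/2)}$ and $e^{-ix}$ respectively. Their difference therefore has modulus
\begin{equation*}
  |e^{-i(2\arctan(x/2)-x)}-1|\leq |x-2\arctan(x/2)| .
\end{equation*}
Since $\arctan y\geq y-y^3/3$ for $y\geq0$ (and by symmetry for negative $y$), we get $|x-2\arctan(x/2)|\leq 2(|x|/2)^3/3=|x|^3/12\leq w^3/12$. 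Telescoping over the $J$ steps gives $Jw^3/12=(\alpha T)^3/(12J^2)$. This is the second term of \eqref{eq:review-cayley-error}.

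The main obstacle is getting the exact constants. The first is the $1/2$ from integrating the Lipschitz deviation. The second is the $1/12$, which needs the sharp elementary inequality $|x-2\arctan(x/2)|\leq|x|^3/12$, valid for all real $x$ with no smallness assumption, rather than a crude Taylor remainder. The Duhamel step also needs the correct ordering convention for the two-time propagator, but that is routine.
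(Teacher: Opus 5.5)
Your proof is correct and self-contained. Splitting through the frozen-Hamiltonian product, the Duhamel bound gives exactly $\beta T^2/(2J)$, and the spectral comparison $|x-2\arctan(x/2)|\leq|x|^3/12$ gives exactly $(\alpha T)^3/(12J^2)$, with no smallness assumption on $w$. The one unstated detail is that you also need $\arctan y\leq y$ for $y\geq0$ to pass from the one-sided inequality to the absolute value, which is immediate.

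This paper does not reprove the lemma; it imports it from Eq.~(13) of \cite{CGWZ26}, so there is no proof here to compare routes against. Your two-term decomposition is the natural one for the two terms of that bound.
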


\subsection{One-query Cayley transducer}
We use the transducer framework introduced in \cite[Sec.~3.1]{BJY24}.
\begin{definition}
Let $\mathcal H_{\mathrm{pub}}$ and $\mathcal H_{\mathrm{priv}}$ be Hilbert
spaces, and let $U$ be a unitary on $\mathcal H_{\mathrm{pub}}$.
A unitary $S$ on $\mathcal H_{\mathrm{pub}}\oplus\mathcal H_{\mathrm{priv}}$ is
a transducer implementing $U$ if there is a linear map
$\Gamma\colon\mathcal H_{\mathrm{pub}}\to
\mathcal H_{\mathrm{priv}}$ such that
\begin{equation*}
  S(\ket \psi \oplus \Gamma \ket \psi) = U \ket \psi \oplus \Gamma \ket \psi
\end{equation*}
for every $\ket\psi\in\mathcal H_{\mathrm{pub}}$.
The vector $\Gamma\ket\psi$ is called the catalyst for $\ket\psi$.
\end{definition}

We now recall from \cite{CGWZ26} a one-query transducer implementing $U_C$.
Its public and private spaces are
\begin{equation*}
  \mathcal H_{\mathrm{pub}} :=
  \Sys, \qquad \Priv := \Time \otimes \Anc \otimes \Sys.
\end{equation*}
We represent $\Priv$ by the registers $\mathsf T\mathsf A\mathsf S$.
For $\ket\psi\in\Sys$ and $\ket v\in\Priv$, a flag qubit $\mathsf P$ represents
their direct sum by
\begin{equation*}
  \ket \psi_{\mathsf S} \oplus \ket v_{\mathsf{TAS}} \quad \longleftrightarrow
      \quad \ket0_{\mathsf{PTA}} \ket \psi_{\mathsf S} + \ket1_{\mathsf P} \ket
      v_{\mathsf{TAS}},
\end{equation*}
where
$\ket0_{\mathsf{PTA}} =\ket0_{\mathsf P}\ket0_{\mathsf T}\ket0_{\mathsf A}$.
Thus $\mathsf P=0$ and $\mathsf P=1$ denote the public and private summands,
respectively.

Define
\begin{equation}
  \iota_0 := \ket0_{\mathsf A} \otimes I_{\mathsf S}, \qquad \Pi_{\mathsf A, 0}
      := \ketbra{0}{0}_{\mathsf A} \otimes I_{\mathsf S},
  \label{eq:review-iota-projector}
\end{equation}
and the midpoint and local catalyst
\begin{equation}
  \ket{y_j} := \frac{\ket{\psi_j} + \ket{\psi_{j + 1}}}{2}, \qquad \ket{x_j} :=
      \sqrt{\frac w2}(I_{\mathsf A \mathsf S} + iO_j) \iota_0 \ket{y_j}.
  \label{eq:review-local-catalyst}
\end{equation}
Here $\iota_0$ appends $\ket0_{\mathsf A}$, and $\Pi_{\mathsf A,0}$ projects
$\mathsf A$ onto $\ket0_{\mathsf A}$.
Since $\ket{x_j}\in\Anc\otimes\Sys$, define $\Gamma\colon\Sys\to\Priv$ by
\begin{equation}
  \Gamma \ket{\psi_0} := \sum_{j = 0}^{J - 1} \ket j_{\mathsf T} \ket{x_j}.
  \label{eq:review-global-catalyst}
\end{equation}
As shown in \cref{prop:review-one-query-transducer}, $\Gamma\ket{\psi_0}$ is the
catalyst for input $\ket{\psi_0}$ of the transducer implementing $U_C$.

Let
\begin{equation}
  c = \frac{1 - w/2}{1 + w/2}, \qquad s = \frac{\sqrt{2w}}{1 + w/2},
  \label{eq:review-cs}
\end{equation}
and define the local update on $\Sys\oplus(\Anc\otimes\Sys)$ by
\begin{equation}
  R =
  \begin{pmatrix}
    cI_{\mathsf S} & -is \iota_0^\dagger \\
    s \iota_0 & i \bigl(c \Pi_{\mathsf A, 0} + I_{\mathsf A \mathsf S}
    - \Pi_{\mathsf A, 0} \bigr)
  \end{pmatrix}.
  \label{eq:review-local-update}
\end{equation}
The next lemma gives the update rule used at every time label.
\begin{lemma}[{\cite[Lemma~2]{CGWZ26}}]
\label{lem:review-local-cayley}
Let $R$ and $\ket{x_j}$ be defined by \eqref{eq:review-local-update} and
\eqref{eq:review-local-catalyst}, with $\ket{\psi_j}$ given by
\eqref{eq:review-cayley-states}.
Then $R$ is unitary and, for every $j$,
\begin{equation}
  R \bigl(\ket{\psi_j} \oplus O_j \ket{x_j} \bigr)
  = \ket{\psi_{j + 1}} \oplus \ket{x_j}.
  \label{eq:review-local-cayley-identity}
\end{equation}
Moreover,
\begin{equation}
  \norm{\ket{x_j}}^2 \leq w \norm{\ket{\psi_j}}^2.
  \label{eq:review-local-catalyst-bound}
\end{equation}
\end{lemma}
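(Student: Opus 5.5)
The plan is to verify the three claims directly from the block form \eqref{eq:review-local-update}, using three facts. First, $c^2+s^2=1$; this holds because $(1-w/2)^2+2w=(1+w/2)^2$. Second, $\iota_0^\dagger\iota_0=I_{\mathsf S}$, $\iota_0\iota_0^\dagger=\Pi_{\mathsf A,0}$, $\iota_0^\dagger\Pi_{\mathsf A,0}=\iota_0^\dagger$ and $\iota_0^\dagger(I-\Pi_{\mathsf A,0})=0$. Third, $O_j$ is a Hermitian unitary with $\iota_0^\dagger O_j\iota_0=H_j/\alpha=:h_j$.

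\textbf{Unitarity.} I will compute $R^\dagger R$ blockwise, with
\[
R^\dagger=\begin{pmatrix} c & s\iota_0^\dagger\\ is\iota_0 & -i(c\Pi_{\mathsf A,0}+I-\Pi_{\mathsf A,0})\end{pmatrix}.
\]
The top-left block is $c^2+s^2\iota_0^\dagger\iota_0=I$. The off-diagonal block is $-ics\,\iota_0^\dagger+is\,\iota_0^\dagger(c\Pi_{\mathsf A,0}+I-\Pi_{\mathsf A,0})=-ics\,\iota_0^\dagger+ics\,\iota_0^\dagger=0$. The bottom-right block is $s^2\Pi_{\mathsf A,0}+c^2\Pi_{\mathsf A,0}+(I-\Pi_{\mathsf A,0})=I$. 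This gives $R^\dagger R=I$, and since $R$ is square, $R$ is unitary.

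\textbf{Update identity.} Write $y=\ket{y_j}$ and $\delta=(\ket{\psi_{j+1}}-\ket{\psi_j})/2$, so that $\ket{\psi_j}=y-\delta$ and $\ket{\psi_{j+1}}=y+\delta$. The Cayley relation $(I+\tfrac{iw}{2}h_j)\ket{\psi_{j+1}}=(I-\tfrac{iw}{2}h_j)\ket{\psi_j}$ rearranges to $\delta=-\tfrac{iw}{2}h_jy$. Using $O_j^2=I$, we get $O_j\ket{x_j}=\sqrt{w/2}\,(O_j+i)\iota_0y=i\sqrt{w/2}\,(I-iO_j)\iota_0 y$.

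The public component of the left-hand side of \eqref{eq:review-local-cayley-identity} is
\[
c\ket{\psi_j}+s\sqrt{w/2}\,(y-ih_jy)=c(y-\delta)+\tfrac{w}{1+w/2}\bigl(y+\tfrac{2}{w}\delta\bigr).
\]
In this expression the coefficient of $y$ is $\frac{1-w/2+w}{1+w/2}=1$ and the coefficient of $\delta$ is $\frac{-(1-w/2)+2}{1+w/2}=1$, so the public component equals $\ket{\psi_{j+1}}$.

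For the private component I will split $\Anc\otimes\Sys$ with $\Pi_{\mathsf A,0}$. On the range of $I-\Pi_{\mathsf A,0}$, the block $R$ acts as $i$. The component of $i\cdot i\sqrt{w/2}\,(I-iO_j)\iota_0y$ there is $\sqrt{w/2}\,(I-\Pi_{\mathsf A,0})(iO_j)\iota_0y$, which matches $(I-\Pi_{\mathsf A,0})\ket{x_j}$. On the $\Pi_{\mathsf A,0}$ part I must check the scalar-coefficient identity
\[
s\ket{\psi_j}-c\sqrt{w/2}\,(y-ih_jy)=\sqrt{w/2}\,(y+ih_jy).
\]
This reduces, exactly as in the public component, to matching coefficients of $y$ and $\delta$. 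This bookkeeping for the private component is the only step where I expect sign or factor errors, so it is where I will be most careful.

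\textbf{Catalyst bound.} Because $O_j$ is Hermitian and unitary, for any $v$ the cross terms in $\norm{(I+iO_j)v}^2$ are $i\langle v,O_jv\rangle-i\langle O_jv,v\rangle=0$, since $\langle v,O_jv\rangle$ is real. Hence $\norm{(I+iO_j)v}^2=2\norm{v}^2$, and so $\norm{\ket{x_j}}^2=w\norm{y}^2$. Each Cayley step is unitary, so $\norm{\ket{\psi_{j+1}}}=\norm{\ket{\psi_j}}$. The triangle inequality then gives $\norm{y}\le\norm{\ket{\psi_j}}$, which yields \eqref{eq:review-local-catalyst-bound}.
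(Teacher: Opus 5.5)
Your argument is correct. There is no in-paper proof to compare it against: the paper imports this lemma from \cite{CGWZ26} and states it without proof.

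The one computation you defer, the $\Pi_{\mathsf A,0}$ component, does close. Substitute $\ket{\psi_j}=y-\delta$ and $-ih_jy=\tfrac{2}{w}\delta$. Matching the coefficient of $y$ then requires $s=(1+c)\sqrt{w/2}$, and matching the coefficient of $\delta$ requires $s=(1-c)\sqrt{2/w}$. Both follow from \eqref{eq:review-cs}, since $1+c=2/(1+w/2)$ and $1-c=w/(1+w/2)$.

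The closest material in the paper is the observation in \cref{sec:dyadic} that $R_j=g_j\Phi_j$. In other words, $R$ is a Givens rotation by $(c,s)$ between $\Sys$ and $\iota_0\Sys$, followed (on the left) by the phase $i$ on the private summand. Reading $R$ this way gives unitarity immediately from $c^2+s^2=1$, in place of your blockwise $R^\dagger R$ computation; the two routes are equivalent.
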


For $\ket\psi\in\Sys$ and $\ket{v_k}\in\Anc\otimes\Sys$, define $R_j$ on
$\Sys\oplus\Priv$ by
\begin{equation}
  R_j \left( \ket \psi \oplus \sum_{k = 0}^{J - 1} \ket k_{\mathsf T} \ket{v_k}
      \right) = \ket{\psi'} \oplus \left( \ket j_{\mathsf T} \ket{v_j'} +
      \sum_{k \ne j} \ket k_{\mathsf T} \ket{v_k} \right),
  \label{eq:review-global-update-definition}
\end{equation}
where
\begin{equation*}
  \ket{\psi'} \oplus \ket{v_j'} = R \bigl(\ket \psi \oplus \ket{v_j} \bigr).
\end{equation*}
Thus $R_j$ applies $R$ jointly to the public space $\mathcal H_{\mathrm{pub}}$
and the private subspace with $\mathsf T=j$, and acts as the identity on all
other private subspaces.
On $\Sys\oplus\Priv$, define
\begin{equation}
  S = S^\circ(I_{\mathrm{pub}} \oplus \HAMT), \qquad S^\circ
  = R_{J - 1} \cdots R_0,
  \label{eq:review-transducer}
\end{equation}
where $I_{\mathrm{pub}}$ is the identity on $\Sys$.
Applying \eqref{eq:review-local-cayley-identity} successively for
$j=0,\ldots,J-1$ gives the following transducer.

\begin{proposition}[{\cite[Proposition~3]{CGWZ26}}]
\label{prop:review-one-query-transducer}
The unitary $S$ on $\Sys\oplus\Priv$ defined in \eqref{eq:review-transducer}
uses one $\HAMT$ query and satisfies
\begin{equation}
  S \bigl(\ket \psi \oplus \Gamma \ket \psi \bigr)
  = U_C \ket \psi \oplus \Gamma \ket \psi
  \label{eq:review-transducer-identity}
\end{equation}
for every $\ket\psi\in\Sys$.
Moreover,
\begin{equation}
  \norm{\Gamma}^2 \leq \alpha T,
  \label{eq:review-global-catalyst-bound}
\end{equation}
where $U_C$ is defined in \eqref{eq:review-cayley-product}.
\end{proposition}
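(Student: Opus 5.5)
We will prove the three claims of \cref{prop:review-one-query-transducer} in turn: the query count, the transducer identity \eqref{eq:review-transducer-identity}, and the catalyst bound \eqref{eq:review-global-catalyst-bound}.

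\textbf{Query count.} By \eqref{eq:review-transducer}, $S$ contains exactly one $\HAMT$ factor. Each $R_j$ is built only from the fixed unitary $R$, controlled on $\mathsf T=j$, and uses no oracle call. So $S$ uses a single query.

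\textbf{Transducer identity.} Fix $\ket\psi=\ket{\psi_0}$, and let $\ket{\psi_j}$ be as in \eqref{eq:review-cayley-states}.
\begin{itemize}
\item[(i)] First apply $I_{\mathrm{pub}}\oplus\HAMT$. By \eqref{eq:review-hamt} it acts blockwise in $\mathsf T$, so
\begin{equation*}
  \ket{\psi_0}\oplus\Gamma\ket{\psi_0} \;\mapsto\; \ket{\psi_0}\oplus\sum_j\ket j_{\mathsf T}O_j\ket{x_j}.
\end{equation*}
\item[(ii)] Then prove by induction on $k$ the invariant
\begin{equation*}
  R_{k-1}\cdots R_0\Bigl(\ket{\psi_0}\oplus\sum_j\ket j O_j\ket{x_j}\Bigr)=\ket{\psi_k}\oplus\Bigl(\sum_{j<k}\ket j\ket{x_j}+\sum_{j\ge k}\ket j O_j\ket{x_j}\Bigr).
\end{equation*}
The base case $k=0$ is trivial. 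For the inductive step, apply $R_k$. By \eqref{eq:review-global-update-definition} it acts as $R$ on the pair $\ket{\psi_k}\oplus O_k\ket{x_k}$ and leaves every other time slot unchanged. By \cref{lem:review-local-cayley} this pair maps to $\ket{\psi_{k+1}}\oplus\ket{x_k}$, which is exactly the invariant at $k+1$.
\item[(iii)] At $k=J$ the invariant gives $\ket{\psi_J}\oplus\Gamma\ket{\psi_0}=U_C\ket{\psi_0}\oplus\Gamma\ket{\psi_0}$.
\end{itemize}
$\Gamma$ is linear, since each $\ket{\psi_j}$ and hence each $\ket{x_j}$ depends linearly on $\ket{\psi_0}$. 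This makes $S$ a transducer in the sense of the definition.

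\textbf{Catalyst bound.} The $\mathsf T$ basis states are orthogonal, so
\begin{equation*}
  \norm{\Gamma\ket{\psi_0}}^2=\sum_j\norm{\ket{x_j}}^2.
\end{equation*}
By \eqref{eq:review-local-catalyst-bound}, each term is at most $w\norm{\ket{\psi_j}}^2$. Each Cayley step $U_j^C$ is unitary, being the Cayley transform of a Hermitian operator. Hence $\norm{\ket{\psi_j}}=\norm{\ket{\psi_0}}$ for all $j$, and the sum is at most $Jw\norm{\ket{\psi_0}}^2=\alpha T\norm{\ket{\psi_0}}^2$.

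The main point needing care is step (ii). We must check that $R_k$ touches only the $\mathsf T=k$ private slot together with the public part, so that slots already restored to $\ket{x_j}$ for $j<k$ are left alone. This follows directly from definition \eqref{eq:review-global-update-definition}, so no real obstacle remains.
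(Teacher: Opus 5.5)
Your proof is correct and follows the same route as the paper, which obtains \eqref{eq:review-transducer-identity} by applying the local identity \eqref{eq:review-local-cayley-identity} successively for $j=0,\ldots,J-1$ after the single $\HAMT$ query; your slot-by-slot invariant is exactly the bookkeeping behind that step. Summing \eqref{eq:review-local-catalyst-bound} over $j$, using unitarity of the Cayley steps and $Jw=\alpha T$, is the natural argument for \eqref{eq:review-global-catalyst-bound}, which the paper itself cites from \cite{CGWZ26} without a separate proof.
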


\subsection{Reuse, LCU, and amplification}

The transducer identity \eqref{eq:review-transducer-identity} assumes that the
catalyst $\Gamma\ket\psi$ is supplied.
For every integer $N\geq1$, the reuse construction of Belovs, Jeffery, and
Yolcu~\cite[Theorem~3.2]{BJY24} removes the need for a catalyst input by using
$N$ calls to $S$, which is the motivation of our simulation algorithm.
Let $P_N$ be the resulting operator on $\mathcal H_{\mathrm{pub}}$.
In \cref{sec:lcu}, we use a reuse register $\mathsf K$ to implement a unitary
$V_N$ that block-encodes $P_N$ with normalization $1$:
\begin{equation}
  (\bra0_{\mathsf{KPTA}} \otimes I_{\mathsf S})V_N (\ket0_{\mathsf{KPTA}}
      \otimes I_{\mathsf S}) = P_N.
  \label{eq:review-reuse-block}
\end{equation}
To describe the error, define $A\colon\Priv\to\Priv$ and
$C\colon\Priv\to\mathcal H_{\mathrm{pub}}$ by
\begin{equation}
  S(0 \oplus \ket v) = C \ket v \oplus A \ket v, \qquad \ket v \in \Priv.
  \label{eq:review-private-blocks}
\end{equation}
Set $G_N(z) := N^{-1}\sum_{\ell=0}^{N-1}z^\ell$.
The next lemma gives the reuse error.
\begin{lemma}[{\cite[Lemma~4]{CGWZ26}}]
\label{lem:review-reuse-error}
For every $N\geq1$, the operator $P_N$ in \eqref{eq:review-reuse-block}
satisfies
\begin{equation}
  U_C - P_N = C\, G_N(A) \Gamma.
  \label{eq:review-reuse-error}
\end{equation}
Here $\Gamma$ is defined in \eqref{eq:review-global-catalyst}.
\end{lemma}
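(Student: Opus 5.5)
The plan is to work entirely in the block decomposition of $S$ with respect to $\mathcal H_{\mathrm{pub}}\oplus\Priv$, and to reduce the lemma to an averaging argument over the transducer identity. First I write
\begin{equation*}
  S = \begin{pmatrix} B & C \\ D & A \end{pmatrix},
\end{equation*}
where $A$ and $C$ are as in \eqref{eq:review-private-blocks}, and $B\colon\mathcal H_{\mathrm{pub}}\to\mathcal H_{\mathrm{pub}}$, $D\colon\mathcal H_{\mathrm{pub}}\to\Priv$ are defined analogously by $S(\ket\psi\oplus0)=B\ket\psi\oplus D\ket\psi$. Reading off the two components of \eqref{eq:review-transducer-identity} from \cref{prop:review-one-query-transducer} gives the operator identities
\begin{equation*}
  U_C = B + C\Gamma, \qquad \Gamma = D + A\Gamma .
\end{equation*}

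Next I iterate the second identity. For every $\ell\geq0$ this yields
\begin{equation*}
  \Gamma = \sum_{k=0}^{\ell-1} A^kD + A^\ell\Gamma .
\end{equation*}
Substituting into the first identity gives, for each $\ell$,
\begin{equation*}
  U_C = B + C\sum_{k=0}^{\ell-1}A^kD + CA^\ell\Gamma .
\end{equation*}
Averaging over $\ell=0,\ldots,N-1$ then gives
\begin{equation*}
  U_C = \frac1N\sum_{\ell=0}^{N-1}\Bigl(B + C\sum_{k<\ell}A^kD\Bigr) + C\,G_N(A)\Gamma .
\end{equation*}
So the lemma follows once I show that
\begin{equation*}
  P_N = \frac1N\sum_{\ell=0}^{N-1}\Bigl(B + C\sum_{k<\ell}A^kD\Bigr).
\end{equation*}

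The remaining and main step is to identify $P_N$, the operator block-encoded by $V_N$ in \eqref{eq:review-reuse-block}, with this average. This amounts to unpacking the reuse construction of \cite[Theorem~3.2]{BJY24}. The reuse register $\mathsf K$ is prepared in the uniform superposition over $\ell\in\{0,\ldots,N-1\}$. In branch $\ell$, the private space starts in $0$ (that is, $\ket0_{\mathsf{PTA}}$ with the flag on the public summand), and the circuit applies $S$ a total of $\ell+1$ times. The first application produces $B\ket\psi\oplus D\ket\psi$. The subsequent $\ell$ applications act only through the private summand, each contributing a factor of $A$. The final application routes the private part back to the public summand through $C$. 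The result is that the public component in branch $\ell$ is $(B+C\sum_{k<\ell}A^kD)\ket\psi$. Any call to $S$ that is not needed in branch $\ell$ is cancelled by its inverse or acts on an orthogonal flag subspace, so the total number of calls is $N$.

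Unpreparing $\mathsf K$ and projecting onto $\ket0_{\mathsf{KPTA}}$ then produces exactly the uniform average with weight $1/N$. I expect the obstacle to be bookkeeping, namely verifying that the flag and projection conventions of \cref{sec:lcu} make the cross terms between different branches vanish. Once the block form of $P_N$ is established, the identity $U_C-P_N=C\,G_N(A)\Gamma$ follows immediately from the averaging computation above.
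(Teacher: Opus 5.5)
Your algebraic reduction is sound. The two components of \eqref{eq:review-transducer-identity} give $U_C=B+C\Gamma$ and $\Gamma=D+A\Gamma$. Iterating gives $\Gamma-\sum_{k<\ell}A^kD=A^\ell\Gamma$, and averaging reduces the lemma to $P_N=\frac1N\sum_{\ell=0}^{N-1}\bigl(B+C\sum_{k<\ell}A^kD\bigr)$, which is the correct formula. The paper does not reprove the lemma; it cites \cite{CGWZ26}, and $P_N$ is fixed by the circuit $V_N$ in \cref{sec:lcu}.

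The gap is in the step you call the main one: your account of how $V_N$ produces this formula is wrong, and taken literally it does not produce it. You treat the $\mathsf K$-branches as independent computations. In branch $\ell$, later calls of $S$ ``act only through the private summand, each contributing a factor of $A$'', surplus calls are ``cancelled by inverses'', and cross-branch terms must be shown to vanish. Such a chain yields a single product of blocks such as $CA^{\ell-1}D\ket\psi$. Nothing in it accounts for the term $B\ket\psi$ or for the $\ell$ separate contributions of $D\ket\psi$ in $B+C\sum_{k<\ell}A^kD$. Also, $V_N$ contains no inverse calls of $S$, and the coupling between branches is the mechanism itself, not something to be cancelled.

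The missing idea is the hand-off of the private state from one branch to the next:
\begin{itemize}
\item Step~(i) places a copy $N^{-1/2}\ket\psi$ on the public summand of every branch $\mathsf K=\ell<N$.
\item By induction on $\ell$, at the start of iteration $\ell$ of step~(ii) the only $\mathsf P=1$ component is $N^{-1/2}\ket\ell_{\mathsf K}(0\oplus v_\ell\ket\psi)$, with $v_0=0$.
\item So the $\mathsf P$-controlled $\HAMT$ followed by $S^\circ$ on $\mathsf K=\ell$ acts as $S$ on $\ket\psi\oplus v_\ell\ket\psi$, in branch $\ell$ only. It leaves $(B+Cv_\ell)\ket\psi$ on the public summand there. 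The $\mathsf P$-controlled $\operatorname{INC}$ then moves the private output $(D+Av_\ell)\ket\psi$ to $\mathsf K=\ell+1$, where it is the catalyst guess for the next call.
\item Hence each branch receives exactly one call of $S$ ($N$ in total), and $v_{\ell+1}=D+Av_\ell$, i.e.\ $v_\ell=\sum_{k<\ell}A^kD$. Your count of $\ell+1$ calls is right only in the sense that branch $\ell$'s output depends on the calls made in branches $0,\ldots,\ell$.
\item In step~(iii), $\bra0_{\mathsf K}F_N^\dagger\ket\ell_{\mathsf K}=N^{-1/2}$ for $\ell<N$, so the public outputs are averaged with weight $1/N$. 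The leftover private part at $\mathsf K=N$ is annihilated by $\bra0_{\mathsf P}$.
\end{itemize}
This gives your formula for $P_N$. Note that $v_\ell$ obeys the same recursion as $\Gamma=D+A\Gamma$, which is exactly why $\Gamma-v_\ell=A^\ell\Gamma$. With this established, your averaging argument completes the proof.
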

For a single $N$, the general error bound
$\norm{U_C-P_N}\leq\norm{\Gamma}/\sqrt N$ decays only as $N^{-1/2}$.
To obtain the optimal precision dependence, we instead combine several values of
$N$.
For real coefficients $\{\lambda_N\}$ satisfying $\sum_N \lambda_N = 1$, set
\begin{equation}
  \widetilde U := \sum_N \lambda_N P_N, \qquad
      L_q := \sum_N |\lambda_N|.
  \label{eq:review-weighted-reuse}
\end{equation}
Summing \eqref{eq:review-reuse-error} with these coefficients gives
\begin{equation}
  U_C - \widetilde U = C \left(\sum_N \lambda_N G_N(A) \right) \Gamma.
  \label{eq:review-weighted-reuse-identity}
\end{equation}
Applying the linear-combination construction for block encodings
\cite[Lemma~52]{GSLW19Full} to the unitaries $V_N$, we obtain a block encoding
of $\widetilde U$ with normalization $L_q$.
We therefore seek coefficients that make the right-hand side of
\eqref{eq:review-weighted-reuse-identity} small while satisfying $L_q\leq2$.
The following choice has both properties.
\begin{proposition}[{\cite[Proposition~7]{CGWZ26}}]
\label{prop:review-combination}
For every positive integer $q$, define
\begin{align}
  \lambda_{2r} &:= -\frac rq\, 2^{-q} \binom qr,
  &&1 \leq r \leq q - 1,
  \label{eq:review-filter-coefficients-negative}\\
  \lambda_{2q + 2r} &:= \frac{q + r}{q}\, 2^{-q} \binom qr,
  &&1 \leq r \leq q,
  \label{eq:review-filter-coefficients}
\end{align}
and set all remaining coefficients to zero; in particular, $\lambda_{2q} = 0$.
These coefficients satisfy
\begin{equation}
  \sum_{N = 1}^{4q} \lambda_N = 1, \qquad L_q = \sum_{N = 1}^{4q}|\lambda_N| = 2
      - 2^{1 - q} < 2.
  \label{eq:review-coefficient-norm}
\end{equation}
The corresponding operator $\widetilde U$ in \eqref{eq:review-weighted-reuse}
satisfies
\begin{equation}
  \norm{U_C - \widetilde U} \leq \sqrt{\alpha T} \left(\frac{12e \alpha T}{q}
      \right)^q
  \label{eq:review-weighted-reuse-error}
\end{equation}
whenever $q\geq\alpha T$ and $J\geq q$.
\end{proposition}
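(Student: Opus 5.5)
The first two claims in \eqref{eq:review-coefficient-norm} are direct binomial computations, and I would dispose of them first. Using $\sum_{r=0}^q\binom qr=2^q$ and $\sum_{r=0}^q r\binom qr=q2^{q-1}$, the negative coefficients sum to $-\tfrac1q2^{-q}(q2^{q-1}-q)=-(\tfrac12-2^{-q})$. The positive ones sum to $2^{-q}\bigl[(2^q-1)+2^{q-1}\bigr]=\tfrac32-2^{-q}$. Adding gives $1$; subtracting the negative sum gives $L_q=2-2^{1-q}$.

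For the error bound I would first compute the scalar filter $F(z):=\sum_N\lambda_NG_N(z)$ in closed form. Since $NG_N(z)=(1-z^N)/(1-z)$, it suffices to evaluate $\sum_N(\lambda_N/N)z^N$. The normalisation is chosen so that $\lambda_{2r}/(2r)=-2^{-q}\binom qr/(2q)$ and $\lambda_{2q+2r}/(2q+2r)=+2^{-q}\binom qr/(2q)$. Writing $X=(1+z^2)^q$, this gives
\begin{equation*}
  \sum_N\frac{\lambda_N}{N}z^N=\frac{2^{-q}}{2q}\Bigl[(X-1)(z^{2q}-1)+z^{2q}\Bigr].
\end{equation*}
Subtracting this from its value at $z=1$ yields the identity
\begin{equation*}
  F(z)=\Bigl(\frac{1+z^2}{2}\Bigr)^{q}G_{2q}(z).
\end{equation*}
By \eqref{eq:review-weighted-reuse-identity}, the error is therefore
\begin{equation*}
  U_C-\widetilde U=C\Bigl(\tfrac{I+A^2}{2}\Bigr)^{q}G_{2q}(A)\,\Gamma.
\end{equation*}
Here $\norm{C}\le1$ and $\norm{G_{2q}(A)}\le1$ because $A$ is a contraction, and $\norm{\Gamma}\le\sqrt{\alpha T}$ by \eqref{eq:review-global-catalyst-bound}. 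The question is how to control the factor $\bigl(\tfrac{I+A^2}{2}\bigr)^q$. A crude bound would place it to the right of $G_{2q}(A)$ and bound it in norm, but that alone gives no smallness.

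The key structural step is to analyse $A$ from \eqref{eq:review-transducer} and \eqref{eq:review-private-blocks}. Decompose $A=D+L$. The part $D=\bigoplus_j\ketbra jj_{\mathsf T}\otimes D_j$ is block-diagonal in $\mathsf T$, with $D_j=i\bigl(c\Pi_{\mathsf A,0}+I-\Pi_{\mathsf A,0}\bigr)O_j$. The part $L$ collects the paths that leave the label $j$ through the public space and re-enter at a later label $k>j$ via $R_k$. Each such path carries a factor $s\cdot s\cdot c^{k-j-1}$, so $L$ is strictly block-lower-triangular in time, with weight $O(w)$ per transition.

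Within each block, $O_j^2=I$ and $c=1-O(w)$, so $D_j^2=-I+E_j$ with $\norm{E_j}=O(w)$. Hence
\begin{equation*}
  \frac{I+A^2}{2}=\frac12\bigl(E+DL+LD+L^2\bigr),
\end{equation*}
where every term carries either an $O(w)$ diagonal correction or at least one forward time step. This is exactly the point where the filter $(1+z^2)^q$ annihilates the unimodular eigenvalues $\pm i$ of the diagonal part.

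I would then expand the $q$-th power as a sum over words, interleaving the factors of $D$, $E$ and $L$. The terms should be organised by the number $k\le 2q$ of \emph{events}, namely diagonal corrections or forward jumps, each carrying a factor $O(w)$. The essential point is a time-ordering count: a product containing $k$ forward jumps needs a strictly increasing sequence of time labels. This lets me bound the total contribution by roughly $\binom{J}{k}(Cw)^k$ rather than $J^k(Cw)^k$; the hypothesis $J\ge q$ enters here to keep the binomial estimates valid. Using $\binom Jk w^k\le(eJw/k)^k=(e\alpha T/k)^k$ and summing over $k\ge q$ with $q\ge\alpha T$ should give $(c_0e\alpha T/q)^q$ with an absolute constant, which I would track to $12$. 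The $2^{-q}$ from the half in $\tfrac{I+A^2}{2}$ helps absorb the combinatorial factor counting which of the $q$ factors contribute which event.

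I expect this expansion to be the main obstacle. The diagonal corrections $E_j$ do not advance time, so the simple time-ordered counting fails for them. The first thing I would try is to conjugate by the block-diagonal unitary, or to factor $D_j^2+I$ exactly using $O_j^2=I$, so that $E_j$ is absorbed into the transition weights. If that fails, the fallback is to bound the $E$-only words directly: inside a single time block, the $q$-fold product of $\tfrac12(E_j+\ldots)$ has norm $O(w)^q$, and $w^q\le(\alpha T/q)^q$ since $J\ge q$.
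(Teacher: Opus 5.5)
The paper does not prove this proposition. It is quoted from \cite[Proposition~7]{CGWZ26}, so there is no in-paper argument to compare with, and your proposal has to stand on its own. It does, modulo completing the last estimate.

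\textbf{What checks out.} Your two coefficient sums are correct. The filter identity $\sum_N\lambda_NG_N(z)=\bigl(\tfrac{1+z^2}{2}\bigr)^qG_{2q}(z)$ is correct: it gives $G_4$ for $q=1$ and $-\tfrac14G_2+\tfrac34G_6+\tfrac12G_8$ for $q=2$. The block structure is also right:
\begin{itemize}
\item $A_{kk}=D_k=i(c\Pi_{\mathsf A,0}+I-\Pi_{\mathsf A,0})O_k$;
\item $A_{k'k}=-is^2c^{k'-k-1}\Pi_{\mathsf A,0}O_k$ for $k'>k$;
\item $A_{k'k}=0$ for $k'<k$.
\end{itemize}
So the reduction to proving $\norm{((I+A^2)/2)^q}\leq(12e\alpha T/q)^q$ is valid.

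\textbf{What needs fixing in the final count.} The count is only sketched, and three points should be corrected.

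(i) Per-letter operator-norm bounds lose the time ordering. $\norm{L}$ is not $\bigO(w)$: its block column sums $\sum_{k'>k}s^2c^{k'-k-1}$ are of order $\min\{1,\alpha T\}$. You must bound the matrix of block norms entrywise along label paths and then apply Schur's test.

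(ii) The corrections $E_j$ are not an obstacle and need no conjugation. They are block-diagonal, so they leave the path count unchanged. They only contribute the factor $\norm{E_j}\leq2\delta+\delta^2\leq3w$, where $\delta=1-c\leq w$. This step, not the binomial estimate, is where $J\geq q$ enters, through $w\leq\alpha T/q$.

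(iii) The number $k$ of jumps in a word can be far below $q$, so summing over $k\geq q$ is the wrong organisation. The correct invariant is $n_E+k\geq q$, where $n_E$ is the number of $E$ letters. The excess events are discarded using $\alpha T/q\leq1$.

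\textbf{Closing the estimate.} Use $\norm{D_k}\leq1$ and $\norm{A_{k'k}}\leq s^2\leq2w$ for $k'>k$. Then every word $\mathcal W$ in the expansion of $(E+DL+LD+L^2)^q$ has row and column sums of block norms at most $(3w)^{n_E}(2w)^k\binom Jk$. Schur's test, together with $(e\alpha T/k)^k=(\alpha T/q)^k(eq/k)^k$ and $(eq/k)^k\leq e^q$, gives
\begin{equation*}
  \norm{\mathcal W}\leq(3w)^{n_E}\binom Jk(2w)^k\leq\Bigl(\frac{3\alpha T}{q}\Bigr)^{n_E}\Bigl(\frac{2e\alpha T}{k}\Bigr)^{k}\leq3^{n_E}2^ke^q\Bigl(\frac{\alpha T}{q}\Bigr)^{n_E+k}\leq3^{n_E}2^ke^q\Bigl(\frac{\alpha T}{q}\Bigr)^{q}.
\end{equation*}
Now sum over the $4^q$ words, with weights $3,2,2,4$ for the factors $E,DL,LD,L^2$, and include the prefactor $2^{-q}$. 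This gives $\norm{((I+A^2)/2)^q}\leq(11e\alpha T/(2q))^q$. Hence $\norm{U_C-\widetilde U}\leq\sqrt{\alpha T}\,(11e\alpha T/(2q))^q$, which is within the stated constant $12e$.
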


The bound $L_q<2$ in \cref{prop:review-combination} allows us to increase the
normalization to $2$ by adding a term that contributes zero to the encoded
operator.
Let $\operatorname{PREP}$ prepare the resulting coefficient state, and let
$\operatorname{SELECT}$ apply $\operatorname{sgn}(\lambda_N)V_N$ according to
the value of the coefficient register.
The resulting LCU circuit is
\begin{equation}
  V = \operatorname{PREP}^\dagger\cdot \operatorname{SELECT} \cdot \operatorname{PREP}.
  \label{eq:review-lcu-circuit}
\end{equation}
Let $\mathsf W$ denote the combined auxiliary register of $V$.
Then $V$ block-encodes $\widetilde U$ with normalization $2$:
\begin{equation}
  (\bra0_{\mathsf W} \otimes I_{\mathsf S})V (\ket0_{\mathsf W} \otimes
      I_{\mathsf S}) = \frac{\widetilde U}{2}.
  \label{eq:review-lcu-block}
\end{equation}
If $\widetilde U=U_C$, then $V$ block-encodes the unitary $U_C$ with
normalization $2$.
Since $2=\csc(\pi/6)$, one step of oblivious amplitude
amplification~\cite{BerryEtAl14OAA}, equivalently the $n=3$ case of
\cite[Theorem~28]{GSLW19Full}, recovers $U_C$ using $V,V^\dagger,V$.
Define the reflection
\begin{equation}
  \mathcal R_{\mathsf W} := 2(\ketbra{0}{0}_{\mathsf W} \otimes I_{\mathsf S})
      -I_{\mathsf W \mathsf S}
  \label{eq:review-clean-reflection}
\end{equation}
and define the amplified circuit
\begin{equation}
  U_{\mathrm{sim}} := -V \mathcal R_{\mathsf W}V^\dagger
      \mathcal R_{\mathsf W}V.
  \label{eq:review-amplified-circuit}
\end{equation}
Since $\widetilde U$ only approximates $U_C$, we use the following robust bound,
which also accounts for the part of the output with $\mathsf W\ne0$.
\begin{lemma}[{\cite[Lemma~8]{CGWZ26}}]
\label{lem:review-robust-amplification}
Let $U$ be a unitary on $\Sys$, and let $V$ on
$\mathcal H_{\mathsf W}\otimes\Sys$ block-encode an operator $\widetilde U$ on
$\Sys$ with normalization $2$.
Set $\eta:=\norm{\widetilde U-U}$.
If $\eta\leq1/8$, then, for every $\ket\psi\in\Sys$, the circuit in
\eqref{eq:review-amplified-circuit} satisfies
\begin{equation}
  \norm{ U_{\mathrm{sim}}(\ket0_{\mathsf W} \ket \psi) -
      \ket0_{\mathsf W}U \ket \psi}
      \leq 3 \eta.
  \label{eq:review-amplification-error}
\end{equation}
\end{lemma}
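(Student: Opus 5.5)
The plan is to compute $U_{\mathrm{sim}}(\ket0_{\mathsf W}\ket\psi)$ exactly in terms of $\widetilde U$, and then split the error into a component inside the $\ket0_{\mathsf W}$ block and a component orthogonal to it. Write $\Pi:=\ketbra00_{\mathsf W}\otimes I_{\mathsf S}$, so that $\mathcal R_{\mathsf W}=2\Pi-I$. The block-encoding hypothesis reads $\Pi V\ket0_{\mathsf W}\ket\phi=\ket0_{\mathsf W}\tfrac12\widetilde U\ket\phi$ and $\Pi V^\dagger\ket0_{\mathsf W}\ket\phi=\ket0_{\mathsf W}\tfrac12\widetilde U^\dagger\ket\phi$.

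\textbf{Step 1: exact expression for the output.} Apply the factors of \eqref{eq:review-amplified-circuit} from the right. First, $\mathcal R_{\mathsf W}V\ket0\ket\psi=V^{\dagger\,-1}\!\cdot(\ldots)$ is handled by writing $\mathcal R_{\mathsf W}V\ket0\ket\psi=2\Pi V\ket0\ket\psi-V\ket0\ket\psi$. Applying $V^\dagger$ gives $V^\dagger\ket0\widetilde U\ket\psi-\ket0\ket\psi$. The next reflection, together with $\mathcal R_{\mathsf W}\ket0\ket\psi=\ket0\ket\psi$, gives
\begin{equation*}
  \ket0(\widetilde U^\dagger\widetilde U-I)\ket\psi-V^\dagger\ket0\widetilde U\ket\psi .
\end{equation*}
Applying $-V$ then yields, with $E:=\widetilde U^\dagger\widetilde U-I$,
\begin{equation*}
  U_{\mathrm{sim}}\ket0\ket\psi=\ket0\widetilde U\ket\psi-V\ket0 E\ket\psi .
\end{equation*}
Splitting $V\ket0E\ket\psi$ into its $\Pi$ and $(I-\Pi)$ parts, the error vector becomes
\begin{equation*}
  \ket0\bigl(F-U\bigr)\ket\psi-(I-\Pi)V\ket0E\ket\psi,\qquad F:=\tfrac12\widetilde U(3I-\widetilde U^\dagger\widetilde U).
\end{equation*}
The two pieces are orthogonal, so the squared error norm is $\norm{(F-U)\psi}^2+\norm{(I-\Pi)V\ket0E\psi}^2$. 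The second term is at most $\norm{E}^2$, because $V$ is unitary.

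\textbf{Step 2: bound $E$ and $F-U$.} Since $\norm{\widetilde U-U}=\eta$ and $U$ is unitary, every singular value $\sigma$ of $\widetilde U$ lies in $[1-\eta,1+\eta]$. Hence $\norm E\le\max|\sigma^2-1|\le2\eta+\eta^2$.

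A crude triangle inequality on $F-U$ would give roughly $3\eta+O(\eta^2)$ for the first term alone, which is too weak. Instead I use the polar decomposition $\widetilde U=W|\widetilde U|$ with $W$ unitary. Then $F=W g(|\widetilde U|)$, where $g(\sigma)=\sigma(3-\sigma^2)/2$ satisfies
\begin{equation*}
  1-g(\sigma)=\tfrac12(\sigma-1)^2(\sigma+2).
\end{equation*}
This gives $\norm{F-W}\le\tfrac12\eta^2(3+\eta)$. Also $\norm{W-\widetilde U}=\max|\sigma-1|\le\eta$, so $\norm{W-U}\le2\eta$. Together,
\begin{equation*}
  \norm{F-U}\le2\eta+\tfrac12\eta^2(3+\eta).
\end{equation*}

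\textbf{Step 3: combine and check the constant.} The error is at most
\begin{equation*}
  \sqrt{\bigl(2\eta+\tfrac12\eta^2(3+\eta)\bigr)^2+(2\eta+\eta^2)^2}.
\end{equation*}
For $\eta\le1/8$, the inner terms are at most about $2.19\eta$ and $2.125\eta$, so the bound is about $\sqrt{4.8+4.5}\,\eta\approx3.05\eta$. This slightly overshoots the target $3\eta$.

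\textbf{Main obstacle: the exact constant $3$.} The $O(\eta^2)$ corrections must be tightened. There are two places to recover slack.
\begin{itemize}
  \item[(i)] Use the exact orthogonal norm $\norm{(I-\Pi)V\ket0E\psi}^2=\norm{E\psi}^2-\tfrac14\norm{\widetilde UE\psi}^2$ instead of $\norm{E}^2$. Since $\norm{\widetilde UE\psi}\ge(1-\eta)\norm{E\psi}$, this reduces the second term by a constant factor, to roughly $(2\eta)^2\bigl(1-(1-\eta)^2/4\bigr)$.
  \item[(ii)] Bound $\norm{W-U}$ more sharply by working with $W$ and $|\widetilde U|$ jointly.
\end{itemize}
I expect (i) alone to bring the total comfortably below $3\eta$ on $\eta\le1/8$: the squared bound drops to about $4.8\eta^2+3.3\eta^2$, which gives about $2.85\eta$. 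The final step is to verify this numerically at the endpoint $\eta=1/8$ and to note that the bound is monotone in $\eta$.
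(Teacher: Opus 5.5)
Your proposal is correct and gives a complete proof. This paper does not prove the lemma: it imports it from \cite[Lemma~8]{CGWZ26}, so there is no in-paper argument to compare against, and your proof stands on its own as a self-contained argument.

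Each step checks out:
\begin{itemize}
\item the exact identity $U_{\mathrm{sim}}(\ket0\ket\psi)=\ket0\widetilde U\ket\psi-V\ket0E\ket\psi$;
\item the orthogonal split into $\ket0(F-U)\ket\psi$ and $(I-\Pi)V\ket0E\ket\psi$;
\item the polar-decomposition estimate $\norm{F-U}\le2\eta+\tfrac12\eta^2(3+\eta)$;
\item the Pythagorean refinement (i), $\norm{(I-\Pi)V\ket0E\psi}^2=\norm{E\psi}^2-\tfrac14\norm{\widetilde UE\psi}^2$.
\end{itemize}

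One correction to the arithmetic in (i): you must keep $\norm E\le2\eta+\eta^2$ rather than $2\eta$. At $\eta=1/8$ the second squared term is therefore $\tfrac{289}{64}\cdot\tfrac{207}{256}\,\eta^2\approx3.65\eta^2$, not $3.3\eta^2$. The first squared term is $(281/128)^2\eta^2\approx4.82\eta^2$, so the total is about $\sqrt{8.47}\,\eta\approx2.91\eta$, not $2.85\eta$. This is still below $3\eta$, though with only about a $3\%$ margin rather than comfortably.

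The coefficient
\begin{equation*}
  \sqrt{\bigl(2+\tfrac12\eta(3+\eta)\bigr)^2+(2+\eta)^2\bigl(1-\tfrac14(1-\eta)^2\bigr)}
\end{equation*}
is increasing on $[0,1/8]$, since every factor is positive and increasing there. Checking the endpoint therefore settles the whole range, and refinement (ii) is unnecessary. If you want more slack, note that $E$ is a function of $|\widetilde U|$. Hence the second term equals $\langle\psi,E^2(I-\tfrac14|\widetilde U|^2)\psi\rangle$, which is at most $\max_{\sigma\in[1-\eta,1+\eta]}(\sigma^2-1)^2(1-\sigma^2/4)\approx3.09\eta^2$ at $\eta=1/8$.

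Finally, delete the stray opening clause of Step 1 (the expression involving $V^{\dagger\,-1}$); the computation that follows it is the correct one.
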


\Cref{alg:review-cgwz} restates \cite[Algorithm~2]{CGWZ26} in the notation used
here.
\Cref{sec:implementation,sec:lcu} implement the same $S^\circ$ and block-encode
the same operators $P_N$, while making the gate costs of $\operatorname{PREP}$
and $\operatorname{SELECT}$ explicit.

\begin{algorithm}[H]
\caption{Time-dependent Hamiltonian simulation from
\cite[Algorithm~2]{CGWZ26}}
\label{alg:review-cgwz}
\small
\begin{algorithmic}[1]
\Require Access to $\HAMT$ and a target error
         $0<\eps\leq1/2$ with $\alpha T>\eps$.
\Ensure A circuit $U_{\mathrm{sim}}$ approximating $U_H(T)$ to error
        at most $\eps$.
\State Choose the smallest integer $q\geq\max\{1,\alpha T\}$ such that
       $\sqrt{\alpha T}(12e\alpha T/q)^q\leq\eps/12$.
\State Choose the smallest power of two $J$ satisfying
       \begin{equation*}
         J \geq \max
             \left\{ 2, q, \frac{\beta T^2}{\eps},
             \frac{(\alpha T)^{3/2}}{\sqrt{3 \eps}} \right\}.
       \end{equation*}
\State Set $t_j=jT/J$ and define the Cayley product
       $U_C$ in \eqref{eq:review-cayley-product}.
\State Construct the one-query transducer
       $S$ in \eqref{eq:review-transducer} for $U_C$.
\State For $N=1,\ldots,4q$, let $V_N$ be the $N$-call reuse unitary that
       block-encodes $P_N$ as in \eqref{eq:review-reuse-block}.
\State Choose $\lambda_1,\ldots,\lambda_{4q}$ from
       \eqref{eq:review-filter-coefficients-negative}--
       \eqref{eq:review-filter-coefficients} and form
       $\widetilde U = \sum_N \lambda_N P_N$.
\State Construct $V$ in \eqref{eq:review-lcu-circuit}, which block-encodes
       $\widetilde U$ with normalization $2$.
\State Apply the one-step amplification circuit
       $U_{\mathrm{sim}}$ in \eqref{eq:review-amplified-circuit}.
\State \Return $U_{\mathrm{sim}}$.
\end{algorithmic}
\end{algorithm}

\section{Gate-efficient implementation}
\label{sec:implementation}

We first state the resulting complexity bound and then give a gate-efficient
implementation of $S^\circ$.
\begin{theorem}[Gate-efficient query-optimal simulation]
\label{thm:gate-efficient}
Let $H$ satisfy \eqref{eq:review-hamiltonian-assumptions}, let
$0<\eps\leq1/2$ and $\alpha T>\eps$, and assume access to $\HAMT$ in
\eqref{eq:review-hamt} with an $a$-qubit block-encoding register.  There exist
an auxiliary register $\mathsf W$ and a unitary circuit $U_{\mathrm{sim}}$ such
that, for every state $\ket{\psi_0}_{\mathsf S}$,
\begin{equation*}
  \norm{ U_{\mathrm{sim}} \bigl(\ket0_{\mathsf W} \otimes
      \ket{\psi_0} \bigr) - \ket0_{\mathsf W} \otimes U_H(T)
      \ket{\psi_0}}
      \leq \eps.
\end{equation*}
The parameter $q$ chosen in \cref{alg:review-cgwz} satisfies
\begin{equation*}
  q = \bigO\! \left( \alpha T + \frac{\log(1/\eps)}
      {\log\! \left(e + \log(1/\eps)/(\alpha T) \right)} \right),
\end{equation*}
and the circuit uses $\bigO(q)$ $\HAMT$ queries.  Furthermore, the circuit uses
\begin{equation*}
  n_{\mathrm{anc}} = \bigO\! \left( a + \log\! \left(1 + \frac{T(\alpha + \beta T)}{\eps} \right)
      \right)
\end{equation*}
auxiliary qubits and $\bigO(qn_{\mathrm{anc}})$ one- and two-qubit gates.
\end{theorem}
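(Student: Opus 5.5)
The proof has two parts: correctness together with query count, which follows by chaining the results already reviewed, and gate and ancilla counts, which require a new implementation of $S^\circ$.

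\textbf{Correctness and query count.} With $q$ and $J$ chosen as in \cref{alg:review-cgwz}, \cref{lem:review-cayley-approximation} gives
\begin{equation*}
  \norm{U_C-U_H(T)}\leq \frac{\beta T^2}{2J}+\frac{(\alpha T)^3}{12J^2}\leq \frac{\eps}{2}+\frac{\eps}{4}.
\end{equation*}
Since $q\geq\alpha T$ and $J\geq q$, \cref{prop:review-combination} gives $\eta:=\norm{U_C-\widetilde U}\leq\eps/12\leq 1/8$. Applying \cref{lem:review-robust-amplification} with target $U=U_C$ bounds the error of $U_{\mathrm{sim}}$ against $\ket0_{\mathsf W}U_C\ket{\psi_0}$ by $3\eta\leq\eps/4$. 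The triangle inequality then gives total error at most $\eps$.

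As written this only lands exactly at $\eps$ if the constants cooperate. If they do not, I would tighten the threshold in the choice of $J$ by a constant factor. This does not change any asymptotics.

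For the bound on $q$, I would use the standard argument for the smallest $q\geq\alpha T$ with $(12e\alpha T/q)^q\leq\eps/(12\sqrt{\alpha T})$. Taking $q=C\bigl(\alpha T+\log(1/\eps)/\log(e+\log(1/\eps)/(\alpha T))\bigr)$ suffices, split into the regimes $\log(1/\eps)\lesssim\alpha T$ and $\log(1/\eps)\gg\alpha T$ as in the time-independent analysis. The $\sqrt{\alpha T}$ prefactor is absorbed because $q\geq\alpha T$ contributes an exponentially small factor. For the query count, $V$ invokes each $V_N$ with $N\leq 4q$, and the SELECT is implemented so that it uses $4q$ controlled calls to $S$ rather than $\sum_N N$. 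Concretely, calls are applied sequentially, and call $k$ is controlled on $k<N$ in the coefficient register. Amplification uses $V$ three times, giving $\bigO(q)$ queries.

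\textbf{Register sizes.} Every register has $\bigO(a+\log J+\log q)$ qubits: $\mathsf S$ is excluded, $\mathsf P$ is one qubit, $\mathsf T$ has $\log J$ qubits, $\mathsf A$ has $a$ qubits, the reuse register $\mathsf K$ and the coefficient register have $\bigO(\log q)$ qubits, and there are $\bigO(1)$ flags. Since $J=\bigO(q+\beta T^2/\eps+(\alpha T)^{3/2}/\sqrt\eps)$ and $\log q=\bigO(\log(1+\alpha T/\eps))$ (because $q\lesssim\alpha T+\log(1/\eps)$), we get $\log J=\bigO(\log(1+T(\alpha+\beta T)/\eps))$.

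\textbf{Gate count.} PREP loads $\bigO(q)$ coefficients with $\bigO(q)$ gates. This can be done either by a sequence of controlled rotations on a unary or binary register, or by an explicit $\bigO(q\log q)$ state preparation; the extra $\log q$ fits inside $n_{\mathrm{anc}}$. Each comparison $k<N$ costs $\bigO(\log q)$ gates. Each reuse step is the BJY construction: it adds and subtracts on $\mathsf K$ and applies a controlled $S$, and incrementing on $\bigO(\log q)$ qubits costs $\bigO(\log q)$ gates. A controlled $\HAMT$ is a query. What remains is to show that a controlled $S^\circ$ costs $\bigO(a+\log J)$ gates. The total is then $\bigO(q)$ controlled uses times $\bigO(n_{\mathrm{anc}})$ gates.

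\textbf{Main obstacle: the dyadic factorization of $S^\circ=R_{J-1}\cdots R_0$.} Each $R_j$ is a $2\times2$ block rotation, with parameters $c,s$, between the public vector (with $\mathsf A=0$) and the private vector $\ket j\ket0_{\mathsf A}$. On the orthogonal complement, where $\mathsf A\neq0$, it acts only as the phase $i$ on $\mathsf T=j$. The phase part is diagonal and product across $j$: it is $i$ on $\mathsf P=1,\mathsf A\neq0$, which is one phase gate controlled on $\mathsf A\neq0$ and costs $\bigO(a)$. The nontrivial part lives on the $(J+1)$-dimensional space spanned by $\ket{\mathrm{pub}}$ and $\ket{j}$, $j<J$, tensored with $\mathsf S$. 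On that space $S^\circ$ is a product of $J$ Givens-type rotations coupling the public vector successively to each $\ket j$.

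My plan is to compute this product in closed form. Its action on $\ket{\mathrm{pub}}$ gives amplitudes $c^J$ and $s c^{J-1-j}$ on $\ket j$ (up to phases), which form a geometric profile. I would then find a unitary $B$ on $\mathsf T$ whose first column is proportional to the vector with entries $c^{J-1-j}$ (equivalently $(ic)^{J-1-j}$ once the phases are tracked). The geometric profile factorizes dyadically, $\bigotimes_{b}(\,\cdot\,)$ over the bits of $j$, so $B$ can be taken as a tensor product of one-qubit rotations. The factorization asserted in the introduction instead uses ``one-qubit gates on each time qubit, controlled on the lower qubits being zero'', which suggests a Householder- or Gram--Schmidt-style $B$ adapted so that the remaining block is exactly diagonal. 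I would verify by induction on $m$, splitting $J=2^m$ into two halves of $2^{m-1}$, that
\begin{equation*}
  S^\circ = B\,\bigl(\text{one rotation between }\ket{\mathrm{pub}}\text{ and }\ket{0}_{\mathsf T}\bigr)\,\bigl(\text{diagonal phase}\bigr)\,B'^\dagger .
\end{equation*}
The recursion is $R$-products on two halves, each with $c$ replaced by $c^{2^{m-1}}$. This verification is the step I expect to be hardest.

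Once the factorization holds, each basis change is $m$ one-qubit gates, each controlled on the lower time qubits being zero. These conditions can be computed successively into $m$ flag qubits with $\bigO(m)$ Toffolis. The middle rotation and phase cost $\bigO(a+m)$, since they are controlled on $\mathsf A=0$ and $\mathsf T=0$. Adding one more control for the reuse step keeps the controlled $S^\circ$ at $\bigO(a+\log J)$ gates, which completes the count.
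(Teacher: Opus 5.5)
\textbf{There is a genuine gap: the factorization of $S^\circ$ is only announced, and the one concrete basis change you propose cannot work.}

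The rest of your argument follows the paper's proof. Correctness comes from \cref{lem:review-cayley-approximation}, \cref{prop:review-combination} and \cref{lem:review-robust-amplification}, and the constants do close: $\eps/2+\eps/4+3\cdot\eps/12=\eps$. The $\bigO(q)$ query count comes from one loop of $4q$ controlled calls gated on $\ell<N$. Your register and gate accounting then reduces everything to a controlled $S^\circ$ costing $\bigO(a+\log J)$ gates. That last claim is the entire new content of the theorem, and you do not prove it. $B$ and $B'$ are never specified and the induction is not carried out.

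Your candidate, a tensor product of one-qubit rotations preparing the geometric profile, fails for three reasons:
\begin{itemize}
\item Fixing the first column of a basis change says nothing about its cost. Once $B'$ is chosen, requiring $B^\dagger C_JB'$ to equal $\operatorname{Rot}(c^J)_{\mathrm{pub},0}$ times a diagonal forces $B\ket k\propto C_JB'\ket k$ for every $k\neq0$.
\item The two profiles are mirror images. We have $S^\circ\ket{\mathrm{pub}}=c^J\ket{\mathrm{pub}}+s\sum_j c^{j}\ket j$, not $sc^{J-1-j}$ and with no factors of $i$. The private part of $(S^\circ)^\dagger\ket{\mathrm{pub}}$ is instead $is\sum_j c^{J-1-j}\ket j$. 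So the two sides need different gates.
\item Those different gates cannot be applied uncontrolled. After the half-size factorizations reduce $C_{2L}$ to $\operatorname{Rot}(r)_{\mathrm{pub},L}\operatorname{Rot}(r)_{\mathrm{pub},0}$ with $r=c^L$, this operator is the identity on every $\operatorname{span}\{\ket k,\ket{L+k}\}$ with $k\neq0$. Uncontrolled gates on $\mathsf T_{m-1}$ then leave the residual $M_{\mathrm{out}}(r)^\dagger M_{\mathrm{in}}(r)=\frac{1}{1+r^2}\bigl(\begin{smallmatrix}2r&1-r^2\\ r^2-1&2r\end{smallmatrix}\bigr)$ on each such pair. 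This is not diagonal for $r<1$, and no other choice helps, since the first columns must be $\propto(r,1)$ and $\propto(1,r)$. One can check that already for $J=4$ no pair of tensor-product basis changes conjugates $C_4$ to $\operatorname{Rot}(c^4)_{\mathrm{pub},0}$ times a diagonal.
\end{itemize}

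The missing idea is the induction step of \cref{lem:dyadic-factorization}. First apply the hypothesis to both halves. The half-range basis changes $V_0^\star,V_1^\star$ commute with the other half's Givens rotations and with each other, so this conjugates $C_{2L}$ into the $3\times3$ matrix \eqref{eq:dyadic-three-dimensional-product} on $\operatorname{span}\{\ket{\mathrm{pub}},\ket0,\ket L\}$. Then verify the explicit identity \eqref{eq:dyadic-three-dimensional-identity}. In it, the distinct gates $M_{\mathrm{in}}(r)$ and $M_{\mathrm{out}}(r)^\dagger$ act \emph{only} on $\operatorname{span}\{\ket0,\ket L\}$ and produce $\operatorname{Rot}(r^2)_{\mathrm{pub},0}$. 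Confining the gate to that pair is exactly the control $\mathsf T_0=\cdots=\mathsf T_{m-2}=0$. This is what gives each basis change only $m$ gates with successively computable controls. Without these gates and this verification, the $\bigO(a+\log J)$ cost of a controlled $S^\circ$, and hence the gate bound, is unproved.

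Two smaller points:
\begin{itemize}
\item The bound $q\lesssim\alpha T+\log(1/\eps)$ does not imply $\log q=\bigO(\log(1+\alpha T/\eps))$ when $\alpha T\ll1$; take $\beta=0$ and $\alpha T=2\eps$. You need $q=\bigO(\alpha T/\eps)$, which follows from the exact second term of $q$, as in the paper.
\item In the shared loop you should say which increment acts on $\mathsf K$, because the cyclic shift modulo $N$ depends on the coefficient register. The paper uses a fixed increment modulo $2^b$ and checks that the encoded block $P_N$ is unchanged.
\end{itemize}
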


\subsection{Dyadic factorization}
\label{sec:dyadic}

The direct implementation of $S^\circ=R_{J-1}\cdots R_0$ applies $J$ updates,
one for each sampled time.
We show that the same unitary factors into an input basis change on $\mathsf T$,
a rotation and a phase gate on $\mathsf P$, and an output basis change on
$\mathsf T$.
Each basis change contains one controlled one-qubit gate per time qubit.

Since every $R_j$, and hence $S^\circ$, acts as the identity on $\mathsf S$, we
suppress this identity throughout the subsection and work on
$\mathsf P\mathsf T\mathsf A$.

Write $J=2^m$, and denote the qubits of $\mathsf T$ by
$\mathsf T_0,\ldots,\mathsf T_{m-1}$.
On $\ket j_{\mathsf T}$, the qubit $\mathsf T_\ell$ stores $j_\ell$ in the
binary expansion
\begin{equation*}
  j = \sum_{\ell = 0}^{m - 1}2^\ell j_\ell, \qquad j_\ell \in \{0, 1\}.
\end{equation*}
Throughout this section, $J\geq\alpha T>0$, so $0<w=\alpha T/J\leq1$.
Hence the numbers in \eqref{eq:review-cs} satisfy
\begin{equation*}
  0 < c < 1, \qquad s = \sqrt{1 - c^2}.
\end{equation*}
For $0\leq r\leq1$, define the one-qubit gates
\begin{equation*}
  \operatorname{Rot}(r) :=
  \begin{pmatrix}
    r & - \sqrt{1 - r^2} \\
    \sqrt{1 - r^2} & r
  \end{pmatrix},
\end{equation*}
and
\begin{equation*}
  M_{\mathrm{in}}(r)
  = \frac1{\sqrt{1 + r^2}}
  \begin{pmatrix}r & 1 \\ 1 & -r \end{pmatrix},
  \qquad
  M_{\mathrm{out}}(r)
  = \frac1{\sqrt{1 + r^2}}
  \begin{pmatrix}1 & r \\ r & -1 \end{pmatrix}.
\end{equation*}
Both $M_{\mathrm{in}}(r)$ and $M_{\mathrm{out}}(r)$ are one-qubit unitaries.

For $\star\in\{\mathrm{in},\mathrm{out}\}$, define $W_J^\star$ as a circuit on
$\mathsf T$.
For $\ell=m-1,m-2,\ldots,0$, apply
\begin{equation*}
  M_\star(c^{2^\ell}) \quad \text{to } \mathsf T_\ell
\end{equation*}
when
\begin{equation*}
  \mathsf T_0 = \cdots = \mathsf T_{\ell - 1} = 0.
\end{equation*}
The gate on $\mathsf T_0$ has no time-register controls.
The circuit $W_J^\star$ contains exactly $m$ one-qubit gates with the controls
specified above.
For $J=1$, set $W_1^\star=I_{\mathsf T}$.
\Cref{fig:dyadic-time-circuit} shows this circuit for $J=8$ and highlights the
decomposition used in the induction proof below.
\begin{figure}[ht]
  \centering
  \begin{quantikz}[row sep=0.35cm, column sep=0.55cm]
    \lstick{$\mathsf T_2$}
    & \gate{M_\star(c^4)}
    & \qw\gategroup[
    wires=3,
    steps=2,
    style={dashed,rounded corners,inner xsep=4pt,inner ysep=3pt},
    background,
    label style={label position=below,anchor=north,yshift=-0.15cm}
    ]{$I_{\mathsf T_2}\otimes W_4^\star$}
    & \qw \\
    \lstick{$\mathsf T_1$}
    & \octrl{-1}
    & \gate{M_\star(c^2)}
    & \qw \\
    \lstick{$\mathsf T_0$}
    & \octrl{-2}
    & \octrl{-1}
    & \gate{M_\star(c)}
  \end{quantikz}
  \caption{The circuit for $W_8^\star$.
    The dashed box applies
    $W_4^\star$ to $\mathsf T_1\mathsf T_0$ and leaves $\mathsf T_2$
    unchanged.}
  \label{fig:dyadic-time-circuit}
\end{figure}

Extend $S^\circ$ from the direct-sum space to $\mathsf P\mathsf T\mathsf A$ by
letting it act as the identity on the orthogonal complement.
On $\mathsf P\mathsf T\mathsf A$, let $\widehat W_J^\star$ apply $W_J^\star$ to
$\mathsf T$ when $\mathsf P=1$ and $\mathsf A=0$, and let
$\widehat{\operatorname{Rot}}(c^J)$ apply $\operatorname{Rot}(c^J)$ to
$\mathsf P$ when $\mathsf T=0$ and $\mathsf A=0$.
Both operators are the identity otherwise.
Also set
\begin{equation*}
  \Phi_{\mathsf P} := \bigl(\ketbra{0}{0}_{\mathsf P} + i
      \ketbra{1}{1}_{\mathsf P} \bigr) \otimes I_{\mathsf T \mathsf A}.
\end{equation*}
We will prove
\begin{equation}
  S^\circ = \widehat W_J^{\mathrm{out}} \widehat{\operatorname{Rot}}(c^J)
      \Phi_{\mathsf P} (\widehat W_J^{\mathrm{in}})^\dagger.
  \label{eq:dyadic-full-factorization}
\end{equation}

Let $\Phi_j$ multiply by $i$ on the private states with $\mathsf T=j$ and act as
the identity otherwise.
Define
\begin{equation*}
  \begin{aligned}
    \ket{\mathrm{pub}}
    &:= \ket0_{\mathsf P} \ket0_{\mathsf T} \ket0_{\mathsf A}, \\
    \ket j
    &:= \ket1_{\mathsf P} \ket j_{\mathsf T} \ket0_{\mathsf A}.
  \end{aligned}
\end{equation*}
Define $g_j$ on these states by
\begin{equation}
  \begin{aligned}
    g_j \ket{\mathrm{pub}}
    &= c \ket{\mathrm{pub}} + s \ket j, \\
    g_j \ket j
    &= -s \ket{\mathrm{pub}} + c \ket j, \\
    g_j \ket k
    &= \ket k \qquad(k \ne j).
  \end{aligned}
  \label{eq:dyadic-givens-action}
\end{equation}
Let $g_j$ leave every remaining basis state of $\mathsf P\mathsf T\mathsf A$
unchanged.
Comparing \eqref{eq:dyadic-givens-action} with \eqref{eq:review-local-update}
gives $R_j = g_j \Phi_j$.
Since $\Phi_j g_k = g_k \Phi_j$ for $j \ne k$,
\begin{equation}
  S^\circ = C_J \Phi_{\mathsf P}, \qquad C_J := g_{J - 1} \cdots g_0,
  \label{eq:dyadic-remove-phase}
\end{equation}
because $\Phi_{J - 1} \cdots \Phi_0 = \Phi_{\mathsf P}$.

The operator $C_J$ acts as the identity on the orthogonal complement of
$\operatorname{span}\{\ket{\mathrm{pub}},\ket0,\ldots, \ket{J-1}\}$.
It therefore suffices to work on this subspace in the ordered basis
$(\ket{\mathrm{pub}},\ket0,\ldots,\ket{J-1})$.
For $0\leq r\leq1$, write $\operatorname{Rot}(r)_{\mathrm{pub},j}$ for
$\operatorname{Rot}(r)$ on $\operatorname{span}\{\ket{\mathrm{pub}},\ket j\}$,
in the ordered basis $(\ket{\mathrm{pub}},\ket j)$, and the identity on the
remaining time-label basis states.

The following lemma proves the required factorization.

\begin{lemma}[Dyadic factorization]
\label{lem:dyadic-factorization}
For every power of two $J$, on
$\operatorname{span}\{\ket{\mathrm{pub}},\ket0,\ldots,\ket{J-1}\}$, the circuits
$W_J^{\mathrm{in}}$ and $W_J^{\mathrm{out}}$ satisfy
\begin{equation}
  (\widehat W_J^{\mathrm{out}})^\dagger C_J \widehat W_J^{\mathrm{in}} =
      \operatorname{Rot}(c^J)_{\mathrm{pub}, 0}.
  \label{eq:dyadic-factorization}
\end{equation}
Thus $\widehat W_J^\star\ket{\mathrm{pub}} =\ket{\mathrm{pub}}$, while its
action on the ordered basis $(\ket0,\ldots,\ket{J-1})$ is the time-register
circuit $W_J^\star$.
\end{lemma}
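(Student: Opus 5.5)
The proof goes by induction on $m=\log_2 J$, following the recursive structure shown in \cref{fig:dyadic-time-circuit}. For $J=1$ we have $W_1^\star=I$, and $C_1=g_0$ is exactly $\operatorname{Rot}(c)_{\mathrm{pub},0}$ by \eqref{eq:dyadic-givens-action}, so the base case holds. The closing remark of the statement is immediate from the definitions: $\widehat W_J^\star$ acts only when $\mathsf P=1$ and $\mathsf A=0$, so it fixes $\ket{\mathrm{pub}}$ and acts on the labels $\ket j$ as $W_J^\star$ does.

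For the inductive step, assume \eqref{eq:dyadic-factorization} holds for $J$ and consider $2J$, which adds a top qubit $\mathsf T_m$. Split the product as $C_{2J}=C^{(1)}C^{(0)}$, where $C^{(0)}=g_{J-1}\cdots g_0$ involves only the labels with $\mathsf T_m=0$ and $C^{(1)}=g_{2J-1}\cdots g_J$ only those with $\mathsf T_m=1$. The map $j\mapsto j+J$ is a relabeling of the lower $m$ qubits, so the hypothesis applies to each half:
\begin{equation*}
  C^{(b)}=\widehat W^{\mathrm{out},(b)}\,\operatorname{Rot}(c^J)_{\mathrm{pub},bJ}\,\bigl(\widehat W^{\mathrm{in},(b)}\bigr)^\dagger ,
\end{equation*}
where $\widehat W^{\star,(b)}$ applies $W_J^\star$ to $\mathsf T_{m-1}\cdots\mathsf T_0$ on the sector $\mathsf T_m=b$.

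The operators $\widehat W^{\star,(1)}$ fix $\ket{\mathrm{pub}}$ and are supported on the $\mathsf T_m=1$ sector, so they commute with $\operatorname{Rot}(c^J)_{\mathrm{pub},0}$. Symmetrically, $\widehat W^{\star,(0)}$ commutes with $\operatorname{Rot}(c^J)_{\mathrm{pub},J}$. The two $b$-sectors are disjoint, so the operators for $b=0$ and $b=1$ also commute with each other. Moving the basis changes outward therefore gives
\begin{equation*}
  C_{2J}=\bigl(I_{\mathsf T_m}\otimes W_J^{\mathrm{out}}\bigr)^{\wedge}\;\operatorname{Rot}(c^J)_{\mathrm{pub},J}\operatorname{Rot}(c^J)_{\mathrm{pub},0}\;\bigl((I_{\mathsf T_m}\otimes W_J^{\mathrm{in}})^{\wedge}\bigr)^\dagger ,
\end{equation*}
and the outer factors are exactly the dashed box in \cref{fig:dyadic-time-circuit}.

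It remains to show that on $\operatorname{span}\{\ket{\mathrm{pub}},\ket0,\ket J\}$, with $r=c^J$,
\begin{equation*}
  \operatorname{Rot}(r)_{\mathrm{pub},J}\operatorname{Rot}(r)_{\mathrm{pub},0}
  = \widetilde M_{\mathrm{out}}\operatorname{Rot}(r^2)_{\mathrm{pub},0}\widetilde M_{\mathrm{in}}^\dagger ,
\end{equation*}
where $\widetilde M_\star$ applies $M_\star(r)$ to $\mathsf T_m$ in the basis $(\ket0,\ket J)$. This gate is precisely the first gate of $W_{2J}^\star$: the control "$\mathsf T_0=\cdots=\mathsf T_{m-1}=0$" selects exactly the labels $0$ and $J$. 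Composing with the dashed box then gives $W_{2J}^\star$, which closes the induction.

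This $3\times3$ identity is the main computation, and I would verify it by explicit matrix multiplication. As a sanity check, the left side maps
\begin{equation*}
  \ket{\mathrm{pub}}\mapsto r^2\ket{\mathrm{pub}}+s'\bigl(\ket0+r\ket J\bigr),\qquad s'=\sqrt{1-r^2}.
\end{equation*}
The norm identity $s'^2(1+r^2)=1-r^4$ shows that the outgoing direction is the normalized vector $(1,r)/\sqrt{1+r^2}$, which is the first column of $M_{\mathrm{out}}(r)$. Likewise, the preimage of $\ket{\mathrm{pub}}$ singles out $(r,1)/\sqrt{1+r^2}$, the first column of $M_{\mathrm{in}}(r)$. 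The remaining concern is signs. One must check that the orthogonal complementary directions, given by the second columns of $M_{\mathrm{in}}$ and $M_{\mathrm{out}}$ with their $-r$ and $-1$ entries, are mapped to each other with coefficient $+1$, and that the sign of the sine term matches $\operatorname{Rot}(r^2)$. Computing the full $3\times3$ product once settles both points.
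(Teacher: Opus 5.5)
Your proposal is correct and follows the paper's proof essentially step for step. Both arguments induct on $\log_2 J$, split $C_{2J}$ into the two half-products, apply the hypothesis to each half via the relabeling $j\mapsto j+J$, commute the half-range basis changes outward, and reduce to the single $3\times3$ identity on $\operatorname{span}\{\ket{\mathrm{pub}},\ket0,\ket J\}$ for the controlled $M_\star(r)$ gate. The explicit multiplication you defer is exactly what the paper carries out, and it confirms your signs: conjugating $\operatorname{Rot}(r)_{\mathrm{pub},J}\operatorname{Rot}(r)_{\mathrm{pub},0}$ by $\operatorname{diag}(1,M_{\mathrm{out}}(r))^\dagger$ on the left and $\operatorname{diag}(1,M_{\mathrm{in}}(r))$ on the right gives $\operatorname{Rot}(r^2)_{\mathrm{pub},0}$, with sine entry $\sqrt{1-r^2}\sqrt{1+r^2}=\sqrt{1-r^4}$ and the complementary direction fixed with coefficient $+1$.
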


\begin{proof}
For $J=1$, \eqref{eq:dyadic-factorization} is \eqref{eq:dyadic-givens-action}.
Let $J=2L$ and assume the lemma for $L$.
Set
\begin{equation*}
  r = c^L, \qquad t = \sqrt{1 - r^2}.
\end{equation*}
For $J=2L$, every time label has a unique decomposition
\begin{equation*}
  j = bL + k, \qquad b \in \{0, 1\}, \quad 0 \leq k < L.
\end{equation*}
Accordingly,
\begin{equation*}
  \ket j_{\mathsf T} = \ket b_{\mathsf T_{m - 1}} \ket
      k_{\mathsf T_{m - 2} \cdots \mathsf T_{0}}.
\end{equation*}
The first operation in $W_{2L}^\star$ applies $M_\star(r)$ to $\mathsf T_{m-1}$
when $k=0$.
It therefore mixes $\ket0$ and $\ket L$ and leaves all other time-label basis
states unchanged.
The remaining gates act on $\mathsf T_0,\ldots,\mathsf T_{m-2}$ and form
$W_L^\star$, independently of $b$.
Their product is $I_{\mathsf T_{m-1}}\otimes W_L^\star$.
Thus $W_{2L}^\star$ first applies the controlled $M_\star(r)$ gate and then
$I_{\mathsf T_{m-1}}\otimes W_L^\star$, as illustrated for $J=8$ in
\cref{fig:dyadic-time-circuit}.

For $b\in\{0,1\}$, let $V_b^\star$ fix $\ket{\mathrm{pub}}$, act as $W_L^\star$
on the ordered basis
\begin{equation}
  (\ket{bL}, \ket{bL + 1}, \ldots, \ket{bL + L - 1}),
  \label{eq:dyadic-half-basis}
\end{equation}
and fix the states in the other half of the time-label range.
Since $C_{2L}=(g_{2L-1}\cdots g_L)(g_{L-1}\cdots g_0)$, the induction hypothesis
on $\operatorname{span}\{\ket{\mathrm{pub}},\ket0,\ldots, \ket{L-1}\}$ gives
\begin{equation}
  (V_0^{\mathrm{out}})^\dagger (g_{L - 1} \cdots g_0) V_0^{\mathrm{in}} =
      \operatorname{Rot}(r)_{\mathrm{pub}, 0}.
  \label{eq:dyadic-first-half}
\end{equation}
Identify the ordered basis in \eqref{eq:dyadic-half-basis} for $b=1$ with
$(\ket0,\ldots,\ket{L-1})$ by replacing each local time-label index $j$, where
$0\leq j<L$, with $L+j$.
Under this identification, the induction hypothesis replaces $g_j$ by $g_{L+j}$
and $V_0^\star$ by $V_1^\star$.
Hence
\begin{equation}
  (V_1^{\mathrm{out}})^\dagger (g_{2L - 1} \cdots g_L) V_1^{\mathrm{in}} =
      \operatorname{Rot}(r)_{\mathrm{pub}, L}.
  \label{eq:dyadic-second-half}
\end{equation}
The operators $V_0^\star$ commute with $g_L,\ldots,g_{2L-1}$, while the
operators $V_1^\star$ commute with $g_0,\ldots,g_{L-1}$.
Moreover, $V_0^\star$ and $V_1^\star$ commute because they act on disjoint
time-label ranges and both fix $\ket{\mathrm{pub}}$.
Multiplying \eqref{eq:dyadic-second-half} by \eqref{eq:dyadic-first-half}
therefore gives
\begin{equation}
  (V_0^{\mathrm{out}}V_1^{\mathrm{out}})^\dagger C_{2L}
  V_1^{\mathrm{in}}V_0^{\mathrm{in}} =
  \operatorname{Rot}(r)_{\mathrm{pub}, L}
  \operatorname{Rot}(r)_{\mathrm{pub}, 0}.
  \label{eq:dyadic-induction-split}
\end{equation}
The two rotations on the last line act nontrivially only on
$\operatorname{span}\{\ket{\mathrm{pub}},\ket0,\ket L\}$.
In the ordered basis $(\ket{\mathrm{pub}},\ket0,\ket L)$, their product is
\begin{equation}
  \begin{pmatrix}
    r & 0 & -t \\ 0 & 1 & 0 \\ t & 0 & r
  \end{pmatrix}
  \begin{pmatrix}
    r & -t & 0 \\ t & r & 0 \\ 0 & 0 & 1
  \end{pmatrix} =
  \begin{pmatrix}
    r^2 & -rt & -t \\
    t & r & 0 \\
    rt & -t^2 & r
  \end{pmatrix}.
  \label{eq:dyadic-three-dimensional-product}
\end{equation}
For either value of $\star$, the product $V_1^\star V_0^\star$ fixes
$\ket{\mathrm{pub}}$ and applies $I_{\mathsf T_{m-1}}\otimes W_L^\star$ to the
states $\ket j$.
On $\operatorname{span}\{\ket0,\ket L\}$, the controlled gates have matrices
$M_{\mathrm{in}}(r)$ and $M_{\mathrm{out}}(r)^\dagger$ in the ordered basis
$(\ket0,\ket L)$.
Thus \eqref{eq:dyadic-induction-split} reduces the induction step to
\begin{equation}
  \begin{aligned}
    &\begin{pmatrix}1 & 0 \\ 0 & M_{\mathrm{out}}(r)^\dagger \end{pmatrix}
    \begin{pmatrix}
      r^2 & -rt & -t \\
      t & r & 0 \\
      rt & -t^2 & r
    \end{pmatrix}
    \begin{pmatrix}1 & 0 \\ 0 & M_{\mathrm{in}}(r) \end{pmatrix} =
    \begin{pmatrix}
      r^2 & - \sqrt{1 - r^4} & 0 \\
      \sqrt{1 - r^4} & r^2 & 0 \\
      0 & 0 & 1
    \end{pmatrix}.
  \end{aligned}
  \label{eq:dyadic-three-dimensional-identity}
\end{equation}
The right-hand side is the matrix of $\operatorname{Rot}(r^2)_{\mathrm{pub},0}$
in the ordered basis $(\ket{\mathrm{pub}},\ket0,\ket L)$.
Since $r^2=c^{2L}$, this is \eqref{eq:dyadic-factorization} for $2L$.
\end{proof}

\begin{remark}
Equation~\eqref{eq:dyadic-factorization} is a cosine--sine decomposition of
$C_J$ with block sizes $1$ and $J$~\cite{PaigeWei94}.
Its two basis changes are given explicitly by the circuits $W_J^{\mathrm{in}}$
and $W_J^{\mathrm{out}}$.
\end{remark}

On the subspace in \cref{lem:dyadic-factorization},
$\widehat{\operatorname{Rot}}(c^J)$ agrees with
$\operatorname{Rot}(c^J)_{\mathrm{pub},0}$.
Rearranging \eqref{eq:dyadic-factorization} therefore gives
\begin{equation}
  C_J = \widehat W_J^{\mathrm{out}} \widehat{\operatorname{Rot}}(c^J) (\widehat
      W_J^{\mathrm{in}})^\dagger.
  \label{eq:dyadic-factorization-rearranged}
\end{equation}
Both sides act as the identity on the orthogonal complement of the subspace in
\cref{lem:dyadic-factorization}, so the equality holds on
$\mathsf P\mathsf T\mathsf A$.
Since $\Phi_{\mathsf P}$ commutes with $\widehat W_J^{\mathrm{in}}$,
\eqref{eq:dyadic-remove-phase} gives
\begin{align*}
  S^\circ
  &= C_J \Phi_{\mathsf P}\\
  &= \widehat W_J^{\mathrm{out}}
  \widehat{\operatorname{Rot}}(c^J) \Phi_{\mathsf P}
  (\widehat W_J^{\mathrm{in}})^\dagger,
\end{align*}
which proves \eqref{eq:dyadic-full-factorization}.

\subsection{Controlled implementation}

Tensoring each factor in \eqref{eq:dyadic-full-factorization} with
$I_{\mathsf S}$ gives the factorization of $S^\circ$ on
$\mathsf P\mathsf T\mathsf A\mathsf S$.
We use it to implement a controlled application of $S^\circ$.
Let $\mathsf Z$ denote the control qubit.
In $\widehat W_J^\star$, the gate on $\mathsf T_\ell$ is applied when
\begin{equation*}
  \mathsf Z = 1, \qquad \mathsf P = 1, \qquad \mathsf A = 0, \qquad \mathsf T_0
      = \cdots = \mathsf T_{\ell - 1} = 0.
\end{equation*}
Compute $f_0,\ldots,f_{m-1}$ into work qubits:
\begin{equation}
  f_0 = [\mathsf Z = 1]\wedge[\mathsf P = 1]\wedge[\mathsf A = 0], \qquad
      f_{\ell + 1} = f_\ell \wedge[\mathsf T_\ell = 0] \quad(0 \leq \ell < m -
      1).
  \label{eq:dyadic-control-bits}
\end{equation}
After computing $f_0,\ldots,f_{m-1}$, proceed from $\ell=m-1$ down to $0$: apply
$M_\star(c^{2^\ell})$ to $\mathsf T_\ell$, controlled by $f_\ell$.
Since $f_\ell$ does not depend on the target $\mathsf T_\ell$, reverse its
computation after the gate.

The factor $\widehat{\operatorname{Rot}}(c^J)$ requires a different control
because its target is $\mathsf P$.
Compute
\begin{equation}
  u = [\mathsf Z = 1]\wedge[\mathsf T = 0]\wedge[\mathsf A = 0],
  \label{eq:dyadic-central-control}
\end{equation}
apply $\operatorname{Rot}(c^J)$ to $\mathsf P$ controlled by $u$, and erase $u$.
The factor $\Phi_{\mathsf P}$ is one phase gate on $\mathsf P$, controlled by
$\mathsf Z$.

We record the resulting circuit cost for later use in SELECT\@.
\begin{proposition}[Controlled implementation of $S^\circ$]
\label{prop:gate-efficient-update-product}
Controlled implementations of $S^\circ$ and $(S^\circ)^\dagger$ each use
$\bigO(a+\log J+1)$ one- and two-qubit gates and $\bigO(a+\log J+1)$ auxiliary
qubits initialized and returned to zero.
\end{proposition}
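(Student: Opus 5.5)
We take the factorization \eqref{eq:dyadic-full-factorization} as given, tensored with $I_{\mathsf S}$. It then suffices to count gates and work qubits for the controlled versions of its four factors: $\widehat W_J^{\mathrm{out}}$, $\widehat{\operatorname{Rot}}(c^J)$, $\Phi_{\mathsf P}$, and $(\widehat W_J^{\mathrm{in}})^\dagger$. For $(S^\circ)^\dagger$ we reverse the order and take inverses: $\widehat W_J^{\mathrm{in}}\Phi_{\mathsf P}^\dagger\widehat{\operatorname{Rot}}(c^J)^\dagger(\widehat W_J^{\mathrm{out}})^\dagger$. Each inverse factor has the same control structure as the original factor, with $M_\star$ replaced by $M_\star^\dagger$, $\operatorname{Rot}(c^J)$ by $\operatorname{Rot}(c^J)^\dagger$, and the phase $i$ by $-i$. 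Adding a control $\mathsf Z$ to a product is the same as controlling each factor, so the cost is the sum over the factors.

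\textbf{Step 1: the basis-change factors.} We first compute $f_0$ in \eqref{eq:dyadic-control-bits}. This is an AND of $\mathsf Z$, $\mathsf P$, and the negations of the $a$ qubits of $\mathsf A$. With a standard ladder of Toffoli gates, using $\bigO(a)$ clean work qubits that are uncomputed afterward, it costs $\bigO(a+1)$ gates. Each subsequent $f_{\ell+1}=f_\ell\wedge[\mathsf T_\ell=0]$ takes one Toffoli gate with a negated control, so computing all $f_\ell$ costs $\bigO(m)$ gates and $m$ work qubits.

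We then apply the gates for $\ell=m-1,\dots,0$. The gate on $\mathsf T_\ell$ is controlled by $f_\ell$, and $f_\ell$ depends only on $\mathsf Z,\mathsf P,\mathsf A,\mathsf T_0,\dots,\mathsf T_{\ell-1}$. None of these qubits is a target at step $\ell$, and the targets of earlier steps, $\mathsf T_{\ell'}$ with $\ell'>\ell$, are not inputs to $f_\ell$. So the stored $f_\ell$ is still valid when it is used.

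The order matters for uncomputation. We could erase $f_\ell$ right after its gate, working from the top index down and uncomputing $f_\ell$ from $f_{\ell-1}$ and $\mathsf T_{\ell-1}$. Alternatively, we keep all $f_\ell$ and erase them in reverse order at the end. In both cases no $f_k$ is evaluated after its inputs have changed: erasing $f_{\ell+1}$ uses $\mathsf T_\ell$, and $\mathsf T_\ell$ is modified only after $f_{\ell+1}$ has been erased, in the top-down scheme. Each controlled one-qubit gate $M_\star(c^{2^\ell})$ decomposes into $\bigO(1)$ CNOTs and one-qubit gates. The total for each basis change is $\bigO(a+m+1)$ gates and $\bigO(a+m)$ work qubits.

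\textbf{Step 2: the rotation and phase factors.} We compute $u$ in \eqref{eq:dyadic-central-control} as an AND of $\mathsf Z$ with the negations of the $m$ qubits of $\mathsf T$ and the $a$ qubits of $\mathsf A$. The same Toffoli ladder does this with $\bigO(a+m+1)$ gates and work qubits. We then apply a $u$-controlled $\operatorname{Rot}(c^J)$, which is $\bigO(1)$ gates, and uncompute $u$. Since $\mathsf P$ is not among the inputs to $u$, the uncomputation is exact. The factor $\Phi_{\mathsf P}$ becomes a controlled-phase gate $\mathrm{diag}(1,1,1,i)$ on $\mathsf Z\mathsf P$, which is a single two-qubit gate.

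\textbf{Step 3: totals.} Summing over the four factors and using $m=\log_2 J$ gives $\bigO(a+\log J+1)$ gates. Work registers are reused across factors, so the auxiliary count is also $\bigO(a+\log J+1)$, and every work qubit is returned to zero.

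The only delicate point is the correctness of the uncomputation, because the control predicates depend on time qubits that are later targeted. This is handled by the ordering argument in Step 1: $f_\ell$ never depends on $\mathsf T_\ell$ or on any qubit already modified when $f_\ell$ is used or erased. All other steps are routine gate counting.
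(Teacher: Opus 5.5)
Your main line is the paper's proof. You realize \eqref{eq:dyadic-full-factorization} factor by factor, compute the flags $f_\ell$ and $u$, apply the controlled one-qubit gates from $\ell=m-1$ down to $0$ while erasing each $f_\ell$ right after its gate, and sum to $\bigO(a+m+1)$ gates. The only real difference is how you build the wide ANDs. You use a Toffoli ladder with $\bigO(a+m)$ clean work qubits. The paper uses the linear-size multi-controlled NOT of \cite{BarencoEtAl95} with one reusable work qubit, so $m+2$ work qubits in total. Both fit the $\bigO(a+\log J+1)$ budget.

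Two statements in your uncomputation discussion are wrong as written, though both are easy to repair.

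\textbf{The keep-all alternative.} Keeping all $f_\ell$ and erasing them at the end fails for the top-down circuit. Once the gate controlled by $f_\ell$ has acted on $\mathsf T_\ell$, recomputing $f_\ell\wedge[\mathsf T_\ell=0]$ no longer reproduces the stored $f_{\ell+1}$. Already for $m=2$, take the branch $f_0=1$, $\mathsf T_0=0$. The gate $M_\star(c)$ puts $\mathsf T_0$ into a superposition of $\ket0$ and $\ket1$. The erasing Toffoli then clears $f_1$ only on the $\ket0$ component and leaves it entangled. So your sentence \emph{in both cases no $f_k$ is evaluated after its inputs have changed} is false; only the interleaved scheme is valid.

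\textbf{The inverse basis changes.} Your ordering argument covers only factors whose gates run from $\ell=m-1$ down to $0$. Two factors run the other way: $(\widehat W_J^{\mathrm{in}})^\dagger$, which is the first factor applied in $S^\circ$ itself, and $(\widehat W_J^{\mathrm{out}})^\dagger$ in $(S^\circ)^\dagger$. They apply their gates for $\ell=0,\ldots,m-1$, each controlled on lower time qubits that have already been targeted. Giving them \emph{the same control structure} with precomputed flags would condition the gate on $\mathsf T_\ell$ on the old values of $\mathsf T_0,\ldots,\mathsf T_{\ell-1}$. That is not $(W_J^\star)^\dagger$, because $M_\star(c)$ on $\mathsf T_0$ does not commute with a gate controlled on $\mathsf T_0=0$.

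The fix is to implement each inverse factor as the gate-by-gate inverse of the forward circuit. For $\ell=0,\ldots,m-1$, compute $f_\ell$ just before its gate, then erase $f_{m-1},\ldots,f_0$ at the end. This has the same cost and returns every work qubit to zero. It is also what the paper implicitly does when it charges $\bigO(a+m)$ gates to $(\widehat W_J^{\mathrm{in}})^\dagger$. With these corrections your proof is complete.
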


\begin{proof}
For either $W_J^\star$, computing and erasing $f_0$ uses two $(a+2)$-controlled
NOTs.
Computing and erasing $f_1,\ldots,f_{m-1}$ uses $2(m-1)$ three-qubit Toffoli
gates.
Thus each of $(\widehat W_J^{\mathrm{in}})^\dagger$ and
$\widehat W_J^{\mathrm{out}}$, including its $m$ controlled one-qubit gates,
costs $\bigO(a+m)$ gates.

Computing and erasing $u$ uses two $(a+m+1)$-controlled NOTs.
By \cite[Corollary~7.4]{BarencoEtAl95}, a $k$-controlled NOT has an
$\bigO(k)$-gate decomposition using one reusable work qubit.
The gates $\operatorname{Rot}(c^J)$ controlled by $u$ and $\Phi_{\mathsf P}$
controlled by $\mathsf Z$ contribute $\bigO(1)$ gates.
Hence the total gate count is $\bigO(a+m+1)$.
The $m+1$ qubits storing $f_0,\ldots,f_{m-1},u$, together with one reusable work
qubit, suffice and are returned to zero.
\end{proof}

By \eqref{eq:review-transducer}, a controlled application of $S$ is obtained
from controlled applications of $S^\circ$ and $I_{\mathrm{pub}}\oplus\HAMT$.

\section{Implementation of PREP and SELECT}
\label{sec:lcu}

We now implement the $\operatorname{PREP}$ and $\operatorname{SELECT}$ circuits
in \eqref{eq:review-lcu-circuit}.
Each controlled application of $S^\circ$ in $\operatorname{SELECT}$ uses the
implementation from \cref{prop:gate-efficient-update-product}.
Use $b$-qubit registers $\mathsf B$ and $\mathsf K$, where
\begin{equation}
  b := \left \lceil \log_2(4q + 2) \right \rceil.
  \label{eq:lcu-register-width}
\end{equation}
The register $\mathsf B$ stores $N$, while $\mathsf K$ stores the reuse label
$\ell$.

\subsection{Coefficient preparation}

The nonzero coefficients in \eqref{eq:review-filter-coefficients-negative} and
\eqref{eq:review-filter-coefficients} have support
\begin{equation}
  \mathcal I_q = \{2r:1 \leq r \leq q - 1\} \cup\{2q + 2r:1 \leq r \leq q\}.
  \label{eq:supported-reuse-lengths}
\end{equation}
Thus $|\mathcal I_q|=2q-1$, and \eqref{eq:review-coefficient-norm} gives
$L_q=2-2^{1-q}$.
Let $\ket\perp_{\mathsf B}:=\ket{4q+1}_{\mathsf B}$.

Choose $\operatorname{PREP}$ such that
\begin{equation}
  \operatorname{PREP} \ket0_{\mathsf B} = \sum_{N \in \mathcal I_q}
      \sqrt{\frac{|\lambda_N|}{2}} \ket N_{\mathsf B} + \sqrt{\frac{2 - L_q}{2}}
      \ket \perp_{\mathsf B}.
  \label{eq:coefficient-state}
\end{equation}
Since $\sum_{N\in\mathcal I_q}|\lambda_N|=L_q$, the last term makes the state in
\eqref{eq:coefficient-state} normalized.
In $\operatorname{SELECT}$,
\begin{equation*}
  \ket \perp_{\mathsf B} \ket0_{\mathsf P} \longmapsto \ket \perp_{\mathsf B}
      \ket1_{\mathsf P},
\end{equation*}
so it contributes zero to the block in \eqref{eq:review-lcu-block}.
The construction of \cite{MottonenEtAl05} prepares an arbitrary state on $b$
qubits with $\bigO(2^b)$ one- and two-qubit gates.
Equation~\eqref{eq:lcu-register-width} gives $2^b<2(4q+2)=\bigO(q)$, so
$\operatorname{PREP}$ costs $\bigO(q)$ gates.

\subsection{The SELECT circuit}

We implement $\operatorname{SELECT}$ following \cite[Algorithm~1]{CGWZ26}, with
the reuse unitary $V_N$ defined below using $\operatorname{INC}$.
For every state $\ket\phi$ on $\mathsf{KPTAS}$, the required action is
\begin{equation}
  \operatorname{SELECT} \bigl(\ket N_{\mathsf B} \ket \phi \bigr) =
      \operatorname{sgn}(\lambda_N) \ket N_{\mathsf B}V_N \ket \phi, \qquad N
      \in \mathcal I_q.
  \label{eq:select-action}
\end{equation}
For each $N\in\mathcal I_q$, choose a unitary $F_N$ on $\mathsf K$ satisfying
\begin{equation}
  F_N \ket0_{\mathsf K} = \frac1{\sqrt N} \sum_{\ell = 0}^{N - 1} \ket
      \ell_{\mathsf K}.
  \label{eq:review-reuse-preparation}
\end{equation}
Let $\operatorname{INC}$ denote increment modulo the dimension of the $b$-qubit
register $\mathsf K$:
\begin{equation*}
  \operatorname{INC} \ket y_{\mathsf K} = \ket{(y + 1) \bmod 2^b}_{\mathsf K}.
\end{equation*}
The modulus $2^b$ is independent of the value stored in $\mathsf B$.

For each fixed $N$, define $V_N$ by the following three-step circuit:
\begin{enumerate}
 \item[(i)] Apply $F_N$ when $\mathsf P=0$.
 \item[(ii)] For $\ell=0,\ldots,N-1$, apply, in order, $\HAMT$ when
       $\mathsf P=1$, $S^\circ$ when $\mathsf K=\ell$, and
       $\operatorname{INC}$ when $\mathsf P=1$.
 \item[(iii)] Apply $F_N^\dagger$ when $\mathsf P=0$.
\end{enumerate}

The circuit in \cite[Eq.~(27)]{CGWZ26} has the same steps~(i) and~(iii), but in
step~(ii) it replaces $\operatorname{INC}$ by
\begin{equation*}
  \ket \ell_{\mathsf K} \longmapsto \ket{(\ell + 1) \bmod N}_{\mathsf K}, \qquad
      0 \leq \ell < N,
\end{equation*}
and leaves the remaining basis states of $\mathsf K$ unchanged.
We now show that replacing this update by $\operatorname{INC}$ does not change
the block in \eqref{eq:review-reuse-block}.
Apply both circuits to $\ket0_{\mathsf{KPTA}}\ket\psi_{\mathsf S}$.
Since $N\leq4q\leq2^b-2$, we have $\ell+1<2^b$ for $0\leq\ell<N$, and
\begin{equation*}
  \ket \ell_{\mathsf K} \ket1_{\mathsf P} \longmapsto \ket{\ell + 1}_{\mathsf K}
      \ket1_{\mathsf P}, \qquad 0 \leq \ell < N.
\end{equation*}
For $\ell<N-1$, both updates move every $\mathsf P=1$ component produced in
iteration $\ell$ from $\mathsf K=\ell$ to $\mathsf K=\ell+1$.
Thus the $S^\circ$ call in iteration $\ell+1$ acts on the same component in both
circuits.
They differ only after iteration $N-1$: $\operatorname{INC}$ moves the final
component with $\mathsf P=1$ to $\mathsf K=N$, whereas the update modulo $N$
moves it to $\mathsf K=0$.
In either circuit this component has $\mathsf P=1$, and no further iteration
acts on it.
Moreover, $F_N^\dagger$ acts only when $\mathsf P=0$.
Hence the bra $\bra0_{\mathsf{KPTA}}$ in \eqref{eq:review-reuse-block}
annihilates the only difference between the two outputs.
The two circuits therefore block-encode the same operator $P_N$.

The construction of \cite[Algorithm~1 and Section~2.5]{ShuklaVedula24}
implements both $F_N$ and $F_N^\dagger$, without ancilla qubits, using
$\bigO(\log(N+1))$ one- and two-qubit gates.
To condition either operation on $\mathsf B=N$ and $\mathsf P=0$, compute and
erase the condition with two additional $(b+1)$-controlled NOTs and control each
gate of the chosen circuit on the resulting work qubit.
This adds $\bigO(b)$ gates.
Using a $b$-qubit register initialized to zero as the addend and the control bit
as the incoming carry, the modulo-$2^b$ adder of
\cite[Sections~4.1--4.2]{CuccaroEtAl04} implements a controlled application of
$\operatorname{INC}$ with $\bigO(b)$ gates and returns the addend register to
zero.

We now implement \eqref{eq:select-action} for all values stored in $\mathsf B$.
By \eqref{eq:review-filter-coefficients-negative} and
\eqref{eq:review-filter-coefficients},
\begin{equation*}
  \operatorname{sgn}(\lambda_N)
  = \begin{cases}
    -1, &2 \leq N \leq 2q - 2, \\
    + 1, &2q + 2 \leq N \leq 4q,
  \end{cases}
  \qquad N \in \mathcal I_q.
\end{equation*}
The circuit is implemented in the following order.
\begin{enumerate}
 \item Apply a phase $-1$ when $\mathsf B$ is even and
       $2\leq\mathsf B\leq2q-2$.
       For each $N\in\mathcal I_q$, apply $F_N$ to
       $\mathsf K$, controlled on $\mathsf B=N$ and $\mathsf P=0$.
 \item For $\ell=0,\ldots,4q-1$, compute
       \begin{equation*}
         h_\ell = [\mathsf B \in \mathcal I_q]\wedge[\mathsf B > \ell].
       \end{equation*}
       Controlled on $h_\ell$, apply, in order,
       $\HAMT$ when $\mathsf P=1$, $S^\circ$ when
       $\mathsf K=\ell$, and $\operatorname{INC}$ when
       $\mathsf P=1$.
       Uncompute the work qubits before the next iteration.
 \item For each $N\in\mathcal I_q$, apply $F_N^\dagger$ to
       $\mathsf K$, controlled on $\mathsf B=N$ and $\mathsf P=0$.
 \item If $\mathsf B=\perp$, flip $\mathsf P$.
\end{enumerate}

For a fixed $N\in\mathcal I_q$,
\begin{equation*}
  h_\ell = [N > \ell]
  = \begin{cases}
    1, &0 \leq \ell < N, \\
    0, &N \leq \ell < 4q.
  \end{cases}
\end{equation*}
Thus item~2 of the SELECT circuit executes step~(ii) in the definition of $V_N$.
Items~1 and~3 execute steps~(i) and~(iii), while item~1 also applies
$\operatorname{sgn}(\lambda_N)$.
This proves \eqref{eq:select-action} on the $\mathsf B=N$ subspace.

By \eqref{eq:supported-reuse-lengths}, $\mathsf B\in\mathcal I_q$ if and only if
its value is even and lies in one of the intervals
\begin{equation*}
  [2, 2q - 2]\quad \text{or} \quad[2q + 2, 4q].
\end{equation*}
This test uses the least significant qubit of $\mathsf B$ and a constant number
of $b$-bit comparisons.
Computing $\mathsf B>\ell$ requires one further comparison.
Computing and erasing these conditions therefore uses $\bigO(b)$
gates~\cite[Section~4.3]{CuccaroEtAl04}.
Testing $\mathsf K=\ell$ reduces to a multi-controlled NOT and also uses
$\bigO(b)$ gates~\cite[Corollary~7.4]{BarencoEtAl95}.

Using the controlled implementation of $S^\circ$ from
\cref{prop:gate-efficient-update-product} in this SELECT construction gives the
gate count needed in the main theorem.
\begin{proposition}[Gate count for $\operatorname{SELECT}$]
\label{prop:gate-efficient-reuse-lcu}
Let $m=\log_2J$.
Each of $\operatorname{SELECT}$ and $\operatorname{SELECT}^\dagger$ uses $4q$
controlled $\HAMT$ queries and
\begin{equation}
  \bigO\! \bigl(q(a + m + \log(q + 1)) \bigr) = \bigO\! \bigl(q(a + \log J + 1)
      \bigr)
  \label{eq:gate-efficient-select-cost}
\end{equation}
one- and two-qubit gates.
Consequently, $V$ in \eqref{eq:review-lcu-circuit} has the same gate bound and
block-encodes $\widetilde U$ with normalization $2$, as in
\eqref{eq:review-lcu-block}.
\end{proposition}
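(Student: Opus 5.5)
The plan is to count the cost of each item of the SELECT circuit separately, sum over the $4q$ iterations, and then add the cost of $\operatorname{PREP}$ and $\operatorname{PREP}^\dagger$. Correctness of \eqref{eq:select-action} has already been argued above: item~2 executes step~(ii) on the $\mathsf B=N$ subspace, items~1 and~3 execute steps~(i) and~(iii) together with the sign, and item~4 sends $\ket\perp_{\mathsf B}\ket0_{\mathsf P}$ to $\ket\perp_{\mathsf B}\ket1_{\mathsf P}$. The block-encoding claim then follows from the standard LCU calculation. Sandwiching SELECT between $\operatorname{PREP}$ and $\operatorname{PREP}^\dagger$ and projecting onto $\ket0_{\mathsf B}\ket0_{\mathsf{KPTA}}$ gives
\begin{equation*}
  \sum_{N\in\mathcal I_q}\frac{|\lambda_N|}{2}\operatorname{sgn}(\lambda_N)P_N=\frac{\widetilde U}{2},
\end{equation*}
using \eqref{eq:review-reuse-block} and \eqref{eq:coefficient-state}. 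The $\perp$ term is dropped because its output has $\mathsf P=1$.

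For the gate count, item~2 runs $4q$ iterations, and each iteration has the following pieces.
\begin{enumerate}
\item Computing and erasing $h_\ell$ costs $\bigO(b)$ gates, using the parity bit of $\mathsf B$ and a constant number of Cuccaro comparators.
\item One controlled $\HAMT$ query, conditioned on $h_\ell$ and $\mathsf P=1$. This is where the $4q$ controlled queries come from.
\item A controlled $S^\circ$, conditioned on $h_\ell$ and $\mathsf K=\ell$. Computing this condition is a $(b+1)$-controlled NOT costing $\bigO(b)$ gates, and the controlled $S^\circ$ itself costs $\bigO(a+m+1)$ gates by \cref{prop:gate-efficient-update-product}.
\item A controlled $\operatorname{INC}$, costing $\bigO(b)$ gates via the Cuccaro adder.
\end{enumerate}
Each iteration therefore costs $\bigO(a+m+b)$, and item~2 costs $\bigO(q(a+m+b))$ in total.

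Items~1 and~3 range over the $2q-1$ values $N\in\mathcal I_q$. For each value, the conditional $F_N$ or $F_N^\dagger$ costs $\bigO(\log(N+1))+\bigO(b)=\bigO(b)$ gates, so each item costs $\bigO(qb)$. The phase in item~1 and the flip in item~4 need $\bigO(b)$ gates. $\operatorname{PREP}$ and its inverse cost $\bigO(q)$. Since $b=\bigO(\log(q+1))$, the total is $\bigO(q(a+m+\log(q+1)))$. The same count holds for $\operatorname{SELECT}^\dagger$, which is obtained by inverting every gate in reverse order and using the controlled $(S^\circ)^\dagger$ from \cref{prop:gate-efficient-update-product}.

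To get the second form of \eqref{eq:gate-efficient-select-cost}, I will use that \cref{alg:review-cgwz} chooses $J\geq q$. This gives $\log(q+1)\leq\log J+1$, so $\log(q+1)$ is absorbed into $\log J+1$.

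I do not expect a genuine obstacle. The one place that needs care is the controlled iterations, where the controls must be nested (on $h_\ell$ together with $\mathsf P$ or $\mathsf K$). For this I will compute a single work qubit holding the conjunction and feed it as the control qubit $\mathsf Z$ of \cref{prop:gate-efficient-update-product}. This keeps the extra cost at $\bigO(b)$ per iteration rather than multiplying gate counts. All work qubits are uncomputed within each iteration, so they can be reused.
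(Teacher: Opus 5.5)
Your proposal is correct and follows the paper's proof essentially step by step. The count is the same: $\bigO(a+m+b)$ per iteration (controlled $S^\circ$ via \cref{prop:gate-efficient-update-product}, plus $\bigO(b)$ for $h_\ell$, the $\mathsf K=\ell$ test, and the controlled increment), $\bigO(qb)$ for the controlled $F_N$ and $F_N^\dagger$, $\bigO(q)$ for $\operatorname{PREP}$, and $J\geq q$ to absorb $b$ into $\log J+1$. Your explicit LCU calculation for the normalization-$2$ block spells out what the paper leaves to its earlier remark that the $\ket\perp_{\mathsf B}$ branch contributes zero.
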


\begin{proof}
Each of the $4q$ iterations contains one controlled $\HAMT$ query.
By \cref{prop:gate-efficient-update-product}, the controlled $S^\circ$ in one
iteration costs $\bigO(a+m)$ gates.
Computing $h_\ell$, testing $\mathsf K=\ell$, applying the controlled increment,
and erasing the work bits cost $\bigO(b)$ gates.
The $4q$ iterations and all controlled $F_N$ and $F_N^\dagger$ operations
therefore give
\begin{equation*}
  4q\, \bigO(a + m + b) + \bigO(qb) = \bigO\! \bigl(q(a + m + b) \bigr).
\end{equation*}
The phase test in item~1 costs $\bigO(b)$ gates and is absorbed in this bound.
Since $J\geq q$, we have $b=\bigO(m+1)$, proving
\eqref{eq:gate-efficient-select-cost}.
The $\operatorname{PREP}$ and $\operatorname{PREP}^\dagger$ circuits each cost
$\bigO(q)$ gates, so $V$ has the same asymptotic gate bound.
Every work qubit is returned to zero.
\end{proof}

\section{Proof of the main theorem}
\label{sec:proof}

The factorization \eqref{eq:dyadic-full-factorization} implements $S^\circ$, and
the circuits $V_N$ in \cref{sec:lcu} satisfy \eqref{eq:review-reuse-block} for
the operators $P_N$ analyzed in \cite{CGWZ26}.
Consequently, $V$ block-encodes $\widetilde U$ in
\eqref{eq:review-weighted-reuse}, so the error analysis of
\cite[Theorem~9]{CGWZ26} applies without change.
The choices of $q$ and $J$ in \cref{alg:review-cgwz} give error at most $\eps$,
where
\begin{equation}
  q = \bigO\! \left( \alpha T + \frac{\log(1/\eps)}
      {\log\! \left(e + \log(1/\eps)/(\alpha T) \right)} \right).
  \label{eq:main-q-order}
\end{equation}

We now match the query and gate counts to the steps of \cref{alg:review-cgwz}.
Set $m=\log_2J$ and let $b$ be the register width in
\eqref{eq:lcu-register-width}.
Steps~1--3 choose the parameters and define $U_C$.
In Steps~4--5, each call to $S$ uses one $\HAMT$ query and $\bigO(a+m)$
additional gates by \cref{prop:gate-efficient-update-product}.

Rather than implementing the circuits $V_N$ separately, Steps~5--7 are realized
together by
$V=\operatorname{PREP}^\dagger\operatorname{SELECT}\operatorname{PREP}$.
By \cref{prop:gate-efficient-reuse-lcu}, $V$ and $V^\dagger$ each use $4q$
$\HAMT$ queries and $\bigO(q(a+\log J+1))$ gates.
Since $J\geq q$, \eqref{eq:lcu-register-width} gives $b=\bigO(m+1)$.

Step~8 applies $V,V^\dagger,V$ and two reflections $\mathcal R_{\mathsf W}$.
Each reflection uses $\bigO(a+\log J+1)$ gates.
The resulting query count is $3(4q)=12q$, and the gate count is
\begin{equation}
  3 \bigO\! \bigl(q(a + \log J + 1) \bigr) + 2 \bigO(a + \log J + 1) = \bigO\!
      \bigl(q(a + \log J + 1) \bigr)
  \label{eq:full-gate-cost}
\end{equation}
one- and two-qubit gates.

It remains to express $\log J$ in terms of the problem parameters.
Recall that $J$ is the least power of two satisfying
\begin{equation}
  J \geq \max
      \left\{ 2, q, \frac{\beta T^2}{\eps},
      \frac{(\alpha T)^{3/2}}{\sqrt{3 \eps}} \right\}.
  \label{eq:main-J-choice}
\end{equation}
For $0<\eps\leq1/2$ and $\alpha T>\eps$, the denominator in the second term of
\eqref{eq:main-q-order} is at least $1$ when $\log(1/\eps)\leq\alpha T/\eps$,
and exceeds $\log(1/\eps)$ otherwise.
Hence
\begin{equation*}
  \frac{\log(1/\eps)} {\log\! \left(e + \log(1/\eps)/(\alpha T) \right)} \leq
      \frac{\alpha T}{\eps}.
\end{equation*}
Equation~\eqref{eq:main-q-order} therefore gives
\begin{equation*}
  q = \bigO\! \left(\frac{\alpha T}{\eps} \right).
\end{equation*}
The other two quantities in \eqref{eq:main-J-choice} satisfy
\begin{equation*}
  \frac{\beta T^2}{\eps} \leq 1 + \frac{T(\alpha + \beta T)}{\eps}, \qquad
      \frac{(\alpha T)^{3/2}}{\sqrt{3 \eps}} \leq \left(\frac{\alpha T}{\eps}
      \right)^{3/2} \leq \left(1 + \frac{T(\alpha + \beta T)}{\eps}
      \right)^{3/2}.
\end{equation*}
Because $J$ is the least power of two satisfying \eqref{eq:main-J-choice}, these
estimates imply
\begin{equation*}
  J = \bigO\! \left[ \left(1 + \frac{T(\alpha + \beta T)}{\eps} \right)^{3/2}
      \right],
\end{equation*}
and hence
\begin{equation*}
  \log J, \ \log(q + 1) = \bigO\! \left( \log\! \left(1 +
      \frac{T(\alpha + \beta T)}{\eps} \right) \right).
\end{equation*}
Substituting these logarithmic estimates and \eqref{eq:main-q-order} into
\eqref{eq:full-gate-cost} gives
\begin{equation*}
  \bigO\! \left[ \left( \alpha T + \frac{\log(1/\eps)}
      {\log\! \left(e + \log(1/\eps)/(\alpha T) \right)} \right) \left( a +
      \log\! \left(1 + \frac{T(\alpha + \beta T)}{\eps} \right) \right) \right],
\end{equation*}
which is the claimed gate bound.

Finally, the registers $\mathsf A,\mathsf T,\mathsf P,\mathsf K,\mathsf B$,
which remain allocated throughout the circuit, have widths $a,m,1,b,b$,
respectively.
At most $\bigO(a+m+b)$ work qubits are present at once; they are uncomputed
after use and reused.
The circuit therefore requires
\begin{align*}
  a + m + 1 + 2b + \bigO(a + m + b)
  &= \bigO(a + m + b)\\
  &= \bigO\! \left(
  a + \log\! \left(1 + \frac{T(\alpha + \beta T)}{\eps} \right)
  \right)
\end{align*}
auxiliary qubits beyond $\mathsf S$.
This completes the proof of \cref{thm:gate-efficient}.

\section{Time-integrated-norm scaling}
\label{sec:l1-scaling}

The gate-efficient implementation above is stated for the standard $\HAMT$ model
with fixed normalization, where the query complexity depends on the worst-case
bound $\alpha T$.
This can be pessimistic when the Hamiltonian is large only on a small part of
the time interval.
Following the time rescaling principle used to obtain $L^1$-norm scaling for
time-dependent Hamiltonian simulation~\cite[Section~4.1]{BCS+20}, we give a
conditional extension of our gate bound.
We assume direct coherent access to the reparameterized Hamiltonian defined
using a Lipschitz upper bound on $\norm{H(t)}$, which is stronger than the
standard $\HAMT$ oracle.
The resulting evolution length is proportional to the integral of this upper
bound.

\subsection{Time reparameterization and oracle access}
\label{sec:l1-access}

Let $T>0$, let $H\colon[0,T]\to\operatorname{Herm}(\Sys)$ be $\beta$-Lipschitz,
and let $h\colon[0,T]\to[0,\infty)$ be $\beta_h$-Lipschitz and satisfy
$\norm{H(t)}\leq h(t)$.
Define
\begin{equation}
  \Lambda_h := \int_0^T h(t) \dd{t}.
  \label{eq:l1-integrated-norm}
\end{equation}
Write $\norm{H}_{\infty,1}:=\int_0^T\norm{H(t)}\dd t$.
Then $\Lambda_h\geq\norm{H}_{\infty,1}$, with equality when $h(t)=\norm{H(t)}$.
If $\Lambda_h=0$, no reparameterization is needed: the identity circuit is
exact, as follows from \eqref{eq:l1-small-norm-bound} below.
Suppose henceforth that $\Lambda_h>0$, and set
\begin{equation}
  \mu := \frac{\Lambda_h}{T}, \qquad g(t) := h(t) + \mu, \qquad F(t) := \int_0^t
      g(u) \dd{u}.
  \label{eq:l1-clock}
\end{equation}
The addition of $\mu$ ensures that $g(t)>0$, even when $h(t)=0$.
Since $g(t)\geq\mu>0$ and $F(T)=2\Lambda_h$, the map $F$ is strictly increasing.
Define
\begin{equation}
  \tau := F^{-1} \colon [0, 2 \Lambda_h] \to [0, T], \qquad \widetilde H(s) :=
      \frac{H(\tau(s))}{g(\tau(s))}.
  \label{eq:l1-reparameterized-hamiltonian}
\end{equation}

\begin{definition}[Reparameterized Hamiltonian oracle]
For a power of two $J$, let $s_j=2j\Lambda_h/J$.
The reparameterized Hamiltonian oracle at resolution $J$ is the Hermitian
unitary
\begin{equation}
  \HAMT^{L^1}_J := \sum_{j = 0}^{J - 1} \ketbra{j}{j}_{\mathsf T} \otimes O_j,
  \label{eq:l1-hamt}
\end{equation}
where every $O_j$ acts on an $a$-qubit register $\mathsf A$ and the system, and
is a Hermitian unitary satisfying
\begin{equation}
  (\bra0_{\mathsf A} \otimes I_{\mathsf S})O_j (\ket0_{\mathsf A} \otimes
      I_{\mathsf S}) = \widetilde H(s_j).
  \label{eq:l1-hamt-block}
\end{equation}
Thus, on the uniform $s$-grid, the oracle block-encodes
$\widetilde H(s_j)=H(\tau(s_j))/g(\tau(s_j))$ with normalization one.
This is a stronger access model than the $\HAMT$ model with fixed normalization
in \cref{sec:preliminaries}, since it gives direct coherent access to the
normalized samples $H(\tau(s_j))/g(\tau(s_j))$.
\end{definition}

\begin{lemma}[Time reparameterization]
\label{lem:time-reparameterization}
Let $T>0$, let $H\colon[0,T]\to\operatorname{Herm}(\Sys)$ be $\beta$-Lipschitz,
and let $h\colon[0,T]\to[0,\infty)$ be $\beta_h$-Lipschitz and satisfy
$\norm{H(t)}\leq h(t)$.
Assume that $\Lambda_h$ in \eqref{eq:l1-integrated-norm} is positive, and define
$\mu,g,F,\tau$, and $\widetilde H$ by
\eqref{eq:l1-clock}--\eqref{eq:l1-reparameterized-hamiltonian}.
Then:
\begin{enumerate}
 \item $F$ is a strictly increasing $C^1$ bijection from $[0,T]$
 onto $[0,2\Lambda_h]$, and
 \begin{equation*}
   \tau'(s) = \frac{1}{g(\tau(s))}.
 \end{equation*}
 \item The reparameterized Hamiltonian satisfies
 \begin{equation*}
   \norm{\widetilde H(s)} \leq 1 \qquad(0 \leq s \leq 2 \Lambda_h).
 \end{equation*}
 \item It is Lipschitz with a constant $\widetilde\beta$ satisfying
 \begin{equation*}
   \norm{\widetilde H(s) - \widetilde H(r)} \leq \widetilde \beta|s - r|, \qquad
       \widetilde \beta \leq \frac{(\beta + \beta_h)T^2}{\Lambda_h^2}.
 \end{equation*}
 \item The full time-ordered evolution is preserved:
 \begin{equation*}
   U_{\widetilde H}(2 \Lambda_h) = U_H(T).
 \end{equation*}
\end{enumerate}
\end{lemma}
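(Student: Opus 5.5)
Items 1 and 2 follow directly from the definitions. Since $h$ is Lipschitz, $g=h+\mu$ is continuous with $g\geq\mu>0$. By the fundamental theorem of calculus, $F$ is $C^1$ with $F'=g>0$, so $F$ is strictly increasing. Its range is $[F(0),F(T)]=[0,\Lambda_h+\mu T]=[0,2\Lambda_h]$. The inverse function theorem then gives that $\tau$ is $C^1$ with $\tau'(s)=1/F'(\tau(s))=1/g(\tau(s))$. For item~2, we have $\norm{H(\tau(s))}\leq h(\tau(s))\leq g(\tau(s))$, so $\norm{\widetilde H(s)}\leq1$.

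For item~3, I first bound $|\tau(s)-\tau(r)|\leq|s-r|/\mu=T|s-r|/\Lambda_h$, since $0<\tau'\leq1/\mu$. Writing $t=\tau(s)$ and $u=\tau(r)$, I split
\begin{equation*}
  \frac{H(t)}{g(t)}-\frac{H(u)}{g(u)}
  = \frac{H(t)-H(u)}{g(t)} + H(u)\frac{g(u)-g(t)}{g(t)g(u)}.
\end{equation*}
The first term has norm at most $\beta|t-u|/\mu$. For the second, I use $\norm{H(u)}\leq g(u)$ and $|g(u)-g(t)|=|h(u)-h(t)|\leq\beta_h|t-u|$, which bounds it by $\beta_h|t-u|/g(t)\leq\beta_h|t-u|/\mu$. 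Combining these estimates gives
\begin{equation*}
  \norm{\widetilde H(s)-\widetilde H(r)}\leq\frac{(\beta+\beta_h)}{\mu}\cdot\frac{T}{\Lambda_h}|s-r|=\frac{(\beta+\beta_h)T^2}{\Lambda_h^2}|s-r|,
\end{equation*}
which is the claimed bound on $\widetilde\beta$. The one point needing care is that the second term must be controlled with a single factor $1/g(t)$: bounding both $g$ factors by $\mu$ loses the cancellation between $\norm{H(u)}$ and $g(u)$.

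For item~4, define $V(s):=U_H(\tau(s))$. Since $U_H$ is $C^1$ in $t$ (it solves a linear ODE with continuous coefficient) and $\tau$ is $C^1$, the chain rule gives
\begin{equation*}
  i V'(s)=\tau'(s)H(\tau(s))U_H(\tau(s))=\widetilde H(s)V(s),
\end{equation*}
with $V(0)=U_H(0)=I$. The continuity of $\widetilde H$, established in item~3, guarantees uniqueness of solutions of this linear ODE, so $V=U_{\widetilde H}$ on $[0,2\Lambda_h]$. Evaluating at $s=2\Lambda_h$ gives $U_{\widetilde H}(2\Lambda_h)=U_H(\tau(2\Lambda_h))=U_H(T)$. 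None of these steps is a real obstacle; the only delicate point is the Lipschitz splitting in item~3 described above.
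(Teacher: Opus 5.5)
Your proposal is correct and follows essentially the same route as the paper. The paper uses the same splitting of $H(t)/g(t)-H(u)/g(u)$ with $\norm{H(u)}/g(u)\leq1$ and the bound $|\tau(s)-\tau(r)|\leq|s-r|/\mu$, obtained there from $|F(t)-F(u)|\geq\mu|t-u|$ instead of from $\tau'\leq1/\mu$, and it proves item 4 the same way, via $V(s)=U_H(\tau(s))$, the chain rule, and uniqueness of the propagator.
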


\begin{proof}
The function $g$ is continuous, so $F$ is $C^1$, with $F'(t)=g(t)$.
If $0\leq u<t\leq T$, then
\begin{align*}
  F(t) - F(u)
  &= \int_u^t g(v) \dd{v}\\
  & \geq \mu(t - u) > 0.
\end{align*}
Thus $F$ is strictly increasing.
Moreover,
\begin{equation*}
  F(0) = 0, \qquad F(T) = \int_0^T h(t) \dd{t} + \mu T = 2 \Lambda_h.
\end{equation*}
Continuity and strict monotonicity therefore make $F$ a bijection from $[0,T]$
onto $[0,2\Lambda_h]$.
Since $F'=g\geq\mu>0$, the inverse function theorem shows that its inverse is
$C^1$ and
\begin{align*}
  \tau'(s)
  &= \frac{1}{F'(\tau(s))}\\
  &= \frac{1}{g(\tau(s))}.
\end{align*}
For every $s\in[0,2\Lambda_h]$,
\begin{equation*}
  \norm{\widetilde H(s)} = \frac{\norm{H(\tau(s))}}{h(\tau(s)) + \mu} \leq 1.
\end{equation*}
For $t,u\in[0,T]$,
\begin{equation*}
  |g(t) - g(u)| = |h(t) - h(u)| \leq \beta_h|t - u|.
\end{equation*}
Consequently,
\begin{align*}
  \norm{\frac{H(t)}{g(t)} - \frac{H(u)}{g(u)}}
  & \leq
  \frac{\norm{H(t) - H(u)}}{g(t)}
  + \frac{\norm{H(u)}\, |g(t) - g(u)|}{g(t)g(u)} \\
  & \leq
  \frac{\beta}{\mu}|t - u|
  + \frac{\norm{H(u)}}{g(u)} \frac{\beta_h}{\mu}|t - u| \\
  & \leq \frac{\beta + \beta_h}{\mu}|t - u|.
\end{align*}
Also,
\begin{equation*}
  |F(t) - F(u)| \geq \mu|t - u|.
\end{equation*}
Taking $t=\tau(s)$ and $u=\tau(r)$ yields
\begin{equation*}
  |\tau(s) - \tau(r)| \leq \frac{|s - r|}{\mu}.
\end{equation*}
Combining the last two estimates gives
\begin{equation*}
  \norm{\widetilde H(s) - \widetilde H(r)} \leq \frac{\beta + \beta_h}{\mu^2}|s
      - r| = \frac{(\beta + \beta_h)T^2}{\Lambda_h^2}|s - r|,
\end{equation*}
which proves the claimed Lipschitz bound.

Finally, define $V(s):=U_H(\tau(s))$.
The chain rule and the defining equation for $U_H$ give
\begin{align*}
  i \dv{}{s}V(s)
  &= H(\tau(s)) \tau'(s)V(s)\\
  &= \widetilde H(s)V(s),
  \qquad
  V(0) = I.
\end{align*}
By uniqueness of the propagator, $V(s)=U_{\widetilde H}(s)$.
Since $\tau(2\Lambda_h)=T$, this implies
\begin{equation*}
  U_{\widetilde H}(2 \Lambda_h) = U_H(T).
\end{equation*}
\end{proof}

\paragraph{The small-\texorpdfstring{$\Lambda_h$}{Lambda h} regime.}
The integral equation for the propagator is
\begin{equation}
  U_H(T) - I = -i \int_0^T H(t)U_H(t) \dd{t}.
  \label{eq:l1-propagator-integral}
\end{equation}
Since $U_H(t)$ is unitary,
\begin{equation}
  \norm{U_H(T) - I} \leq \int_0^T \norm{H(t)} \dd{t} \leq \Lambda_h.
  \label{eq:l1-small-norm-bound}
\end{equation}
Thus, if $T=0$ or $\Lambda_h\leq\eps$, the identity circuit gives error at most
$\eps$ and makes no oracle queries.

\subsection{Gate-efficient integrated-norm scaling}
\label{sec:l1-corollary}

\begin{theorem}[Gate-efficient integrated-norm scaling]
\label{thm:l1-gate-efficient}
Let $T>0$, let $H\colon[0,T]\to\operatorname{Herm}(\Sys)$ be $\beta$-Lipschitz,
and let $h\colon[0,T]\to[0,\infty)$ be $\beta_h$-Lipschitz and satisfy
$\norm{H(t)}\leq h(t)$.
Let $0<\eps\leq1/2$, set
\begin{equation*}
  \Lambda_h := \int_0^T h(t) \dd{t},
\end{equation*}
and suppose that $\Lambda_h>\eps$.
Given query access to $\HAMT^{L^1}_J$ in \eqref{eq:l1-hamt} for a sufficiently
large power of two $J$, with an $a$-qubit block-encoding register, there is a
circuit $U_{\mathrm{sim}}$ and an auxiliary register $\mathsf W$ such that, for
every state $\ket\psi\in\Sys$,
\begin{equation*}
  \norm{ U_{\mathrm{sim}}(\ket0_{\mathsf W} \ket \psi) -
      \ket0_{\mathsf W}U_H(T) \ket \psi}
      \leq \eps.
\end{equation*}
It uses $\bigO(q_{L^1})$ queries, where
\begin{equation}
  q_{L^1} := \Lambda_h + \frac{\log(1/\eps)}
      {\log\! \left(e + \log(1/\eps)/\Lambda_h \right)}.
  \label{eq:l1-query-bound}
\end{equation}
In addition, the circuit uses
\begin{equation*}
  n_{\mathrm{anc}} = \bigO\! \left( a + \log\! \left( 1 +
      \frac{\Lambda_h + (\beta + \beta_h)T^2}{\eps} \right) \right)
\end{equation*}
auxiliary qubits and $\bigO(q_{L^1}n_{\mathrm{anc}})$ additional one- and
two-qubit gates.
\end{theorem}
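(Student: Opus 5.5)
The plan is to reduce to \cref{thm:gate-efficient} applied to the reparameterized Hamiltonian $\widetilde H$ on $[0,2\Lambda_h]$. By \cref{lem:time-reparameterization}, $\widetilde H$ satisfies the hypotheses \eqref{eq:review-hamiltonian-assumptions} with $\widetilde T := 2\Lambda_h$, $\widetilde\alpha := 1$ and $\widetilde\beta \leq (\beta+\beta_h)T^2/\Lambda_h^2$. Its time-ordered evolution equals $U_H(T)$. The grid $s_j = 2j\Lambda_h/J = j\widetilde T/J$ is exactly the uniform grid used by $\HAMT_J$ in \cref{sec:preliminaries}. Hence $\HAMT^{L^1}_J$ is precisely an $\HAMT_J$ oracle for $\widetilde H$ with normalization $\widetilde\alpha = 1$, since each $O_j$ block-encodes $\widetilde H(s_j)/1$.

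The hypotheses of \cref{thm:gate-efficient} require $0<\eps\leq 1/2$, which is given, and $\widetilde\alpha\widetilde T = 2\Lambda_h > \eps$, which follows from $\Lambda_h>\eps$. We then take $J$ to be the power of two chosen in \cref{alg:review-cgwz} for the parameters $(\widetilde\alpha,\widetilde\beta,\widetilde T)$; this is the sense of ``sufficiently large''. \cref{thm:gate-efficient} yields a circuit $U_{\mathrm{sim}}$ with
\[
\norm{U_{\mathrm{sim}}(\ket0_{\mathsf W}\ket\psi)-\ket0_{\mathsf W}U_{\widetilde H}(2\Lambda_h)\ket\psi}\leq\eps ,
\]
and part 4 of \cref{lem:time-reparameterization} replaces $U_{\widetilde H}(2\Lambda_h)$ by $U_H(T)$.

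Next, the query count transfers directly. The theorem gives
\[
\bigO\!\left(2\Lambda_h+\frac{\log(1/\eps)}{\log(e+\log(1/\eps)/(2\Lambda_h))}\right).
\]
The factor $2$ inside the logarithm in the denominator changes it by at most a constant factor, since $\log(e+x/2)\geq \tfrac12\log(e+x)$ for $x\geq0$. So this is $\bigO(q_{L^1})$.

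The only step requiring care is translating the ancilla bound, which is
\[
\bigO\!\left(a+\log\!\left(1+\frac{\widetilde T(\widetilde\alpha+\widetilde\beta\widetilde T)}{\eps}\right)\right).
\]
Substituting gives
\[
\widetilde T(\widetilde\alpha+\widetilde\beta\widetilde T) \leq 2\Lambda_h + \frac{4\Lambda_h^2(\beta+\beta_h)T^2}{\Lambda_h^2} = 2\Lambda_h + 4(\beta+\beta_h)T^2 .
\]
The $\Lambda_h^{-2}$ in $\widetilde\beta$ cancels against $\widetilde T^2$. Hence the logarithm is $\bigO\bigl(\log(1+(\Lambda_h+(\beta+\beta_h)T^2)/\eps)\bigr)$, which matches $n_{\mathrm{anc}}$.

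The gate count $\bigO(q n_{\mathrm{anc}})$ from \cref{thm:gate-efficient} then becomes $\bigO(q_{L^1}n_{\mathrm{anc}})$. No new circuit construction is needed; the main point to check is that the oracle model of \cref{sec:preliminaries}, with Hermitian unitary $O_j$ and grid $t_j = j\widetilde T/J$, is literally instantiated.
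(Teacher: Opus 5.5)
Your proposal is correct and follows the paper's proof essentially step by step. It uses the same reduction to \cref{thm:gate-efficient} via \cref{lem:time-reparameterization}, the same inequality $\log(e+x/2)\geq\frac12\log(e+x)$ for the query bound, and the same cancellation $\widetilde T(\widetilde\alpha+\widetilde\beta\widetilde T)\leq 2\Lambda_h+4(\beta+\beta_h)T^2$ for the logarithmic factor; the only detail you leave implicit, which the paper states, is that the constant factors inside the logarithm are absorbed because $\Lambda_h/\eps>1$.
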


The improvement in \cref{thm:l1-gate-efficient} is the replacement of the
worst-case evolution length $\alpha T$ by the upper bound $\Lambda_h$ on the
integrated norm.
In particular, $h(t)=\norm{H(t)}$ gives $\Lambda_h=\norm{H}_{\infty,1}$ and
satisfies $\beta_h\leq\beta$.
The Lipschitz constants remain in the logarithmic gate and space factor through
the resolution of the reparameterized Hamiltonian.

\begin{proof}
Apply \cref{thm:gate-efficient} to the reparameterized Hamiltonian
$\widetilde H$ on $[0,2\Lambda_h]$, with normalization $1$ and Lipschitz
constant $\widetilde\beta\leq(\beta+\beta_h)T^2/\Lambda_h^2$.
\Cref{lem:time-reparameterization} gives $\norm{\widetilde H(s)}\leq1$, the
stated Lipschitz bound, and $U_{\widetilde H}(2\Lambda_h)=U_H(T)$.
The oracle $\HAMT^{L^1}_J$ is the $\HAMT$ oracle required by
\cref{thm:gate-efficient} for $\widetilde H$ on the uniform grid
$s_j=2j\Lambda_h/J$, with normalization one.

For the query complexity, the normalization is one and the evolution time is
$2\Lambda_h$, so \cref{thm:gate-efficient} gives
\begin{equation*}
  q = \bigO\! \left( 2 \Lambda_h + \frac{\log(1/\eps)}
      {\log\! \left(e + \log(1/\eps)/(2 \Lambda_h) \right)} \right).
\end{equation*}
For every $z\geq0$,
\begin{equation*}
  e + z \leq (e + z/2)^2, \qquad \text{and hence} \qquad \log(e + z/2) \geq
      \frac12 \log(e + z).
\end{equation*}
Taking $z=\log(1/\eps)/\Lambda_h$ absorbs the factor two in the denominator,
while $2\Lambda_h=\bigO(\Lambda_h)$.
Thus $q=\bigO(q_{L^1})$, as claimed.

For the gate complexity, the logarithmic factor in \cref{thm:gate-efficient}
contains
\begin{align*}
  (2 \Lambda_h) \bigl(1 + 2 \Lambda_h \widetilde \beta \bigr) & \leq 2 \Lambda_h
      + 4(\beta + \beta_h)T^2.
\end{align*}
Consequently,
\begin{equation*}
  \log\! \left( 1 + \frac{(2 \Lambda_h)(1 + 2 \Lambda_h \widetilde \beta)}{\eps}
      \right) = \bigO\! \left( \log\! \left( 1 +
      \frac{\Lambda_h + (\beta + \beta_h)T^2}{\eps} \right) \right).
\end{equation*}
Constant factors inside the logarithm are absorbed because $\Lambda_h/\eps>1$.
Substitution in \cref{thm:gate-efficient}, together with $q=\bigO(q_{L^1})$,
gives the asserted gate bound.
The same substitution in its auxiliary-qubit bound gives the asserted space
bound.

A sufficient resolution is obtained by letting $q$ denote the integer selected
by \cref{alg:review-cgwz} after this substitution and taking $J$ to be a power
of two.
The resolution condition used in \cref{thm:gate-efficient} is satisfied when
\begin{equation*}
  J \geq \max
      \left\{ 2, q, \frac{4(\beta + \beta_h)T^2}{\eps},
      \frac{(2 \Lambda_h)^{3/2}}{\sqrt{3 \eps}} \right\}.
\end{equation*}
Indeed, the third term bounds $4\Lambda_h^2\widetilde\beta/\eps$, and the fourth
is exactly $(2\Lambda_h)^{3/2}/\sqrt{3\eps}$.
Finally, $U_{\widetilde H}(2\Lambda_h)=U_H(T)$, so the simulator produced by
\cref{thm:gate-efficient} has the stated target.
\end{proof}

\section*{Acknowledgments}

Boyang Chen, Minbo Gao and Zhengfeng Ji are supported by National Key Research
and Development Program of China (Grant No.~2023YFA1009403) and National
Natural Science Foundation of China (Grant No.~12347104).
Tongyang Li, Xinzhao Wang, and Shuo Zhou are supported by the National Natural
Science Foundation of China (Grant No.~62372006).

\paragraph{Statement on AI use.}
The project was initiated and motivated by the authors.
The main ideas underlying the work, including the central proof strategies, were
generated by large language models.
The human authors carefully studied, refined, and verified these ideas, and
substantially rewrote and reorganized their presentation, adding the motivation
and exposition necessary to communicate the arguments clearly.
The final paper reflects the authors' own understanding of the results, and the
authors take full responsibility for every claim, proof, and citation in it.

\begingroup
\small
\bibliographystyle{alphaurl}
\bibliography{ref}
\endgroup

\end{document}